\let\orgdescriptionlabel\descriptionlabel
\renewcommand*{\descriptionlabel}[1]{%
  \let\orglabel\label
  \let\label\@gobble
  \phantomsection
  \edef\@currentlabel{#1}%
  \let\label\orglabel
  \orgdescriptionlabel{#1}%
}
\newtheorem{fact}{Fact}
\newcommand{\Log}{\mbox{{\sf L}}}
\newcommand{\FL}{\mbox{\sf FL}}
\newcommand{\Ceql}{\mbox{{\sf C=L}}}
\newcommand{\ShP}{\mbox{{\sf \#P}}}
\newcommand{\NP}{\mbox{{\sf NP}}}
\newcommand{\Gl}{\mbox{{\sf GapL}}}
\newcommand{\ACz}{\mbox{{\sf AC}$^0$}}
\newcommand{\poly}{\mbox{{\sf poly}}}
\newcommand{\calC}{\mbox{${\cal C}$}}
\newcommand{\parL}{\mbox{{\sf $\oplus$L}}}
\newcommand{\ION}{{\mathsf{NXT}}}
\newcommand{\mat}[4]{\left( \begin{array}{cc}
							       #1 & #2 \\
							       #3 & #4
                                  \end{array} \right)}
\begin{document}
\title{Bounded Treewidth and Space-efficient Linear Algebra}
\author{Nikhil Balaji \& Samir Datta\thanks{Part of the work was done on a visit to the 
Institute for Theoretical Computer Science at Leibniz University Hannover}}
\institute{
  Chennai Mathematical Institute (CMI), India\\
  \email{\{nikhil,sdatta\}@cmi.ac.in}
}
\maketitle
\begin{abstract}
Motivated by a recent result of Elberfeld, Jakoby and Tantau\cite{EJT} 
showing that $\mathsf{MSO}$ properties are Logspace computable on graphs 
of bounded tree-width, we consider the complexity of computing the determinant 
of the adjacency matrix of a bounded tree-width graph and as our main result 
prove that it is in Logspace. It is important to notice that the determinant 
is neither an $\mathsf{MSO}$-property nor counts the number of solutions of an 
$\mathsf{MSO}$-predicate.  This technique yields Logspace algorithms for 
counting the number of spanning arborescences and directed Euler tours in 
bounded tree-width digraphs.

We demonstrate some linear algebraic applications of the determinant algorithm 
by describing Logspace procedures for the characteristic polynomial,
the powers of a weighted bounded tree-width graph and feasibility of a system 
of linear equations where the underlying bipartite graph has bounded 
tree-width.

Finally, we complement our upper bounds by proving \Log-hardness of the 
problems of computing the determinant, and of powering a bounded tree-width 
matrix. We also show the \Gl-hardness of Iterated Matrix Multiplication where	
each matrix has bounded tree-width.
\end{abstract}
%\newpage	

\section{Introduction}
The determinant is a fundamental algebraic invariant of a matrix.
For an $n \times n$ matrix $A$ the determinant is given by the expression
\(
 \mbox{Det}(A) = \sum_{\sigma \in S_n} \mbox{sign}(\sigma) \prod_{i \in [n]} a_{i,\sigma(i)}
\)
where $S_n$ is the symmetric group on $n$ elements, $\sigma$ is a permutation
from $S_n$ and $\mbox{sign}(\sigma)$ is the parity of the number of inversions in $\sigma$
($\mbox{sign}(\sigma) = 1$ if the number of inversions in $\sigma$ is even and $0$ if it is odd). 
Even though the summation in the definition runs over $n!$ many terms, there are many
efficient sequential \cite{vzGG} and parallel \cite{Berk} algorithms for computing the determinant.

Apart from the inherently algebraic methods to compute the determinant there
are also combinatorial algorithms (see, for instance, Mahajan and Vinay \cite{MV}) 
which extend the definition of determinant as a signed sum of cycle covers in the weighted
adjacency matrix of a graph. \cite{MV} are thus able to give another proof
of the \Gl-completeness of the determinant, a result first proved by 
Toda \cite{Toda91}.
For a more complete discussion on the known algorithms for the determinant, see \cite{MV}.

%%The aforementioned results are inherently algebraic in nature in the sense that they
%%exploit the linear algebraic properties of the determinant. Mahajan and Vinay\cite{MV}
%%depart from this path and give a completely combinatorial algorithm to compute the 
%%determinant (and in the process prove that the determinant is $\Gl$-complete). By
%%interpreting the matrix as an adjacency matrix of a weighted directed graph, their
%%algorithm for the determinant just counts the signed sum of cycle covers.

Armed with this combinatorial interpretation of the determinant
 and faced with its \Gl-hardness, 
one can ask if the determinant is any easier when the underlying matrix represents simpler
classes of graphs. Datta, Kulkarni, Limaye, Mahajan \cite{DKLM} study the complexity of the 
determinant and permanent, when the underlying directed graph is planar and show that they are as 
hard as the general case - $\Gl$ and $\ShP$-hard, respectively.
We revisit these questions in the context of bounded tree-width graphs.
% 
% The parallel complexity of the Determinant and related linear algebraic invariants 
% have got much attention in the past due to their tight connection with the structural
% complexity of logspace counting classes\cite{Hoa03,HMT06} and the bipartite
% perfect matching problem on restricted classes of graphs\cite{Hoa10}. 

Many $\NP$-complete graph problems become tractable when restricted to 
graphs of bounded tree-width. In an influential paper, Courcelle \cite{Cou} proved
that any property of graphs expressible in Monadic Second Order $\mathsf{MSO}$ logic 
can be decided in linear time on bounded tree-width
graphs. For example, Hamiltonicity is an $\mathsf{MSO}$ property and hence deciding if a 
bounded tree-width graph has a Hamiltonian cycle can be done in linear time.
More recently Elberfeld, Jakoby, Tantau \cite{EJT} showed that in fact, $\mathsf{MSO}$ 
properties  on bounded tree-width graphs can be decided in \Log. 

We study the Determinant problem when the underlying directed graph has 
bounded tree-width and show a Log-space upper bound. 
In the same vein we also compute other linear algebraic invariants of a
bounded tree-width matrix, such as the characteristic polynomial, rank and powers 
of a matrix in Logspace. Interpreting rectangular matrices as (weighted) 
bipartite graphs, we are also able to show that checking for the feasibility
of a system of linear equations for such matrices arising from bounded tree-width
bipartite graphs is in \Log. $\mathsf{FSLE}$ has previously been studied for general graphs
in \cite{ABO} where it is shown to be complete for the first level of the Logspace
counting hierarchy: $\Log^{\Ceql}$.

We give a tight bound on the complexity of the determinant by showing 
that it is $\Log$-hard via a reduction from directed reachability in paths. 
We also show that it is unreasonable to attempt to extend the Logspace upper bound of
determinant and powering to Iterated Matrix Multiplication (IMM) of bounded tree-width 
matrices, by showing \Gl-hardness for IMM. It is worthwhile to contrast this  with the
case of general graphs, where the Determinant, IMM and Matrix Powering are known to be 
inter-reducible to each other and hence complete for \Gl.

%\subsection{Counting Spanning Trees and Euler Tours}
Counting spanning trees in directed graphs is easily  seen to be in \Gl\ by the
matrix tree theorem\cite{RPSalgebraic}. Counting modulo $2$
 has recently been proved to be \parL-hard for planar graphs in \cite{DK}.
%% This bound was improved to \Log 
%% for counting modulo $2$ in planar graphs,
%% by Braverman, Kulkarni and Roy \cite{BKR}. 
As a direct consequence of our determinant 
result and the Kirchoff matrix tree theorem it follows that the problem is in \Log\ 
for graphs of bounded tree-width.
%%
%%The parallel complexity of counting spanning trees for planar graphs was investigated by
%%Braverman, Kulkarni and Roy \cite{BKR}, who give tight bounds on the complexity of the problem,
%%both in general and in the modular setting. They show that the problem is $\Log$-complete 
%%when the modulus is $2^k$, for constant $k$ and  for any other modulus and in the non-modular
%%case, the problem is shown to be as hard in the planar case as for the case of arbitrary graphs.
%%We consider the bounded tree-width case and show that the problem is in $\Log$ for counting
%%the number of spanning trees exactly, in contrast to the planar and general cases.
%%
%\subsection{Euler Tours}

%% An Euler tour of a graph is a walk on the graph that traverses every
%% edge in the graph exactly once. Given a graph, deciding if there is an
%% Euler tour of the graph is quite simple. Indeed, the famous K\"onigsberg
%% bridge problem that founded graph theory is just a question
%% of existence of Euler tours on these bridges. Euler settled in the negative
%% and in the process gave a necessary and sufficient condition for a graph to
%% be \textit{Eulerian}(A graph is Eulerian if and only if all the vertices 
%% are of even degree). This gives a simple algorithm to check if a graph is
%% Eulerian. 
%% 
%% An equally natural question is to ask for the number of distinct Euler 
%% tours in a graph. For the case of directed graphs, 
The {\bf BEST} theorem due
to De Bruijn, Ehrenfest, Smith and Tutte gives an exact formula for
the number of Euler tours in a directed graph (see Fact \ref{fact:best})
in terms of the number of directed spanning trees of the graph.

\subsection{Our Results and Techniques}
 Through out this paper, we work with matrices with entries from $\mathbb{Q}$,
 unless stated otherwise. We show that the following can be computed/tested in \Log:
\begin{enumerate}
 \item (Main Result)
 The Determinant of an $(n \times n)$ matrix $A$ whose underlying 
 undirected graph has bounded tree-width. As a corollary we can also compute
 the coefficients of the characteristic polynomial of a matrix.
 \item The inverse of an $(n \times n)$ matrix $A$ whose underlying 
 undirected graph has bounded tree-width. As a corollary we get a
 Logspace algorithm to compute the powers $A^k$ of a matrix $A$ 
 (with rational entries) whose support is a bounded tree-width digraph.
  \item Testing if a system of rational linear equations $Ax = b$
  is feasible where $A$ is (a not necessarily square) matrix 
  whose support is the biadjacency matrix of an undirected bipartite
  graph of bounded tree-width.
 \item The number of Spanning Trees in graphs of bounded tree-width.
 \item The number of Euler tours in a bounded tree-width directed graph
%% % where the underlying undirected graph is bounded tree-width.
 %% \item The number of Euler tours in a undirected bounded 
%%  tree-width graph.
\end{enumerate}

We also show hardness results to complement the 
above easiness results:
\begin{enumerate}
 \item Computing the determinant of a bounded tree-width matrix is 
\Log-hard which precludes further improvement in the Logspace upper bound.
\item Computing the iterated matrix multiplication of bounded tree-width
matrices is \Gl-hard which precludes attempts to extend the \Log-bound
on powering matrices of bounded tree-width to iterated matrix 
multiplication.
\item Powering matrices are however \Log-hard which prevents attempts to
further improve the \Log-bound on matrix powering.
\end{enumerate}

At the core of the results is our algorithm to compute the determinant
by writing down an $\mathsf{MSO}$ formula that evaluates to true on
every valid cycle cover of the bounded tree-width graph underlying $A$. 
The crucial point being that the cycle covers are parameterised on the
number of cycles in the cycle cover, a quantity closely related to the 
sign of the cycle covers. This makes it possible
to invoke the cardinality version of Courcelle's theorem(for Logspace) due 
to \cite{EJT} to compute the determinant. A more subtle point is that in
order to keep track of the number of cycles as the size of a set of vertices,
we need to pick one vertex per cycle. Picking one vertex per cycle is done
by choosing the ``smallest'' vertex in the cycle. In order to pick a vertex
in a cycle cover, we need to define a total order on the vertices which
makes this part of the proof technically challenging.

We use this determinant algorithm  and the Kirchoff matrix tree theorem
along with the BEST theorem to count directed Euler tours. 
%% Using the algorithm for
%% directed Eulerian tours, we get an algorithm to count the number of Euler 
%% tours in an undirected bounded tree-width graph. Concurrently and independent of
%% this work, using different techniques, Chebolu, Cryan, Martin \cite{CCMtw} have 
%% given a polynomial time algorithm for counting Euler tours in undirected bounded 
%% tree-width graphs. 

% For FSLE
% %% For the linear algebraic applications 
% we use the equivalence between
% a directed graph with a given adjacency matrix $A$ and a
% bipartite undirected graph with the biadjacency matrix $A$. We observe
% that this equivalence holds even under the bounded tree-width 
% restriction. We crucially use a Lemma from \cite{DMRTV} to verify that
% a system of linear equations is feasible by checking if a collection of determinants
% is zero. 

\subsection{Organization of the paper}
Section \ref{sec:prelim} introduces some notation and terminology required for the rest of the paper.
In Section \ref{sec:btwDet}, we give a Logspace algorithm to compute the Determinant of 
matrices of bounded tree-width and give some linear algebraic and graph theoretic applications. 
%In Section \ref{sec:appG}, we
%give some applications of the determinant algorithm to some counting problems on bounded tree-width
%graphs. 
%In Section~\ref{sec:fsle}, we discuss the problem of feasibility of a sytem of linear equations
%when the underlying matrix is bounded treewidth. 
In Section~\ref{sec:hardness}, we give some 
$\Log$-hardness results to complement our Logspace upperbounds.
In Section \ref{sec:concl}, we mention some problems that remain open.
%% and mention some interesting open
%%questions raised by this paper.

\section{Preliminaries}\label{sec:prelim}

\subsection{Background on Graph Theory}
\begin{definition}\label{def:tdecomp}
Given an undirected graph $G = (V_G, E_G)$ a tree decomposition of $G$ is a tree
$T = (V_T, E_T)$(the vertices in $V_T \subseteq 2^{V_G}$ are called \textit{bags}), 
such that 
\begin{enumerate}
 \item Every vertex $v \in V_G$ is present in at least one bag, i.e., 
 $\cup_{X \in V_T} X = V_G$.
 \item If $v \in V_G$ is present in bags $X_i, X_j \in V_T$, then
 $v$ is present in every bag $X_k$ in the unique path between $X_i$
 and $X_j$ in the tree $T$.
 \item For every edge $(u, v) \in E_G$, there is a bag $X_r \in V_T$ such that
 $u, v \in X_r$.
\end{enumerate}
The width of a tree decomposition is the $\max_{X \in V_T} (|X| - 1)$. The tree width of
a graph is the minimum width over all possible tree decomposition of the graph.
\end{definition}

\begin{definition}\label{def:ccover}
Given a weighted directed graph $G = (V, E)$ by its adjacency matrix $[a_{ij}]_{i,j\in[n]}$,
a \textit{cycle cover} $\calC$ of $G$ is a set of vertex-disjoint cycles that cover the vertices of $G$. 
I.e., $\calC = \{C_1, C_2, \ldots, C_k\}$, where $V(C_i) = \{c_{i_1}, \ldots, c_{i_r}\} \subseteq V$ such that
$(c_{i_1}, c_{i_2})$, $(c_{i_2}, c_{i_3})$, $\ldots$, $(c_{i_{r-1}}, c_{i_r})$, $(c_{i_r}, c_{i_1}) \in E(C_i) \subseteq E$ and 
$\sqcup_{i=1}^k V(C_i) = V$. 
\end{definition}

\begin{fact}\label{fact:ccsign}
The weight of the cycle $C_i = \prod_{j \in [r]} \mbox{wt}(a_{ij})$
and the weight of the cycle cover $\mbox{wt}(\calC) = \prod_{i \in [k]} \mbox{wt}(C_i)$. The sign of
the cycle cover $\calC$ is $(-1)^{n+k}$. 
\end{fact}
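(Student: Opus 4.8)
The plan is to observe that the two weight equalities are immediate unpackings of Definition~\ref{def:ccover}, so the substance of the Fact lies entirely in the sign claim. First I would fix the standard correspondence between cycle covers of $G$ and permutations of $V=[n]$: a cycle cover $\calC$ assigns to each vertex $i$ the unique out-neighbour $\sigma(i)$ reached along the cycle containing $i$, and since the cycles are vertex-disjoint and cover $V$, this $\sigma$ is a well-defined bijection whose disjoint-cycle decomposition is exactly $\{C_1,\dots,C_k\}$. Under this correspondence $\prod_{i\in[n]} a_{i,\sigma(i)}$ is precisely the product of the edge weights of the cover, i.e. $\mbox{wt}(\calC)=\prod_{i\in[k]}\mbox{wt}(C_i)$, which matches the two weight formulas; so the sign of $\calC$ should be taken to be $\mbox{sign}(\sigma)$.

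Next I would compute $\mbox{sign}(\sigma)$ from its cycle structure. Writing $\ell_i:=|V(C_i)|$ for the length of the $i$-th cycle, the key elementary fact is that a single $\ell$-cycle factors as a product of $\ell-1$ transpositions, and hence has sign $(-1)^{\ell-1}$. Because the sign is multiplicative over the disjoint cycles and $\sum_{i=1}^k \ell_i = n$ (the cycles partition the $n$ vertices), this yields
\[
 \mbox{sign}(\sigma)=\prod_{i=1}^k (-1)^{\ell_i-1}=(-1)^{\sum_{i=1}^k (\ell_i-1)}=(-1)^{\,n-k}.
\]
Finally, since $(-1)^{n-k}=(-1)^{n+k}$ (the two exponents differ by $2k$), the sign of $\calC$ equals $(-1)^{n+k}$, as claimed.

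The one point that needs care — and the main obstacle — is that the introduction defines $\mbox{sign}(\sigma)$ via the parity of the number of inversions, not via transposition length, so I must bridge the two characterisations. This is the classical equivalence of the two definitions of the sign: every transposition flips the inversion parity, so any product of $m$ transpositions has sign $(-1)^m$ independently of the chosen factorisation; applying this to the factorisation of $\sigma$ into $\sum_i(\ell_i-1)=n-k$ transpositions recovers $(-1)^{n-k}$ and closes the gap. I would either cite this standard fact or include the short induction showing that composing with a transposition changes the inversion count by an odd number.
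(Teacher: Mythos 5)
Your proof is correct and is exactly the standard argument: the paper states this as a Fact without proof, implicitly relying on the classical correspondence between cycle covers and permutations and on the identity $\mbox{sign}(\sigma)=(-1)^{n-k}=(-1)^{n+k}$ for a permutation with $k$ disjoint cycles on $n$ elements. Your bridging of the inversion-parity definition with the transposition-count computation is the right (and only) point needing care, and you handle it correctly.
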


% \begin{definition}\label{def:det}
% For a matrix $n \times n$ matrix $A = [a_{ij}]_{i,j \in [n]}$, the determinant of $A$
% \[
%  \mbox{Det}(A) = \sum_{\sigma \in S_n} \mbox{sign}(\sigma) \prod_{i \in [n]} a_{i,\sigma(i)}
% \]
% where $S_n$ is the symmetric group on $n$ elements and $\mbox{sign}(\sigma)$ is the number of inversions
% in $\sigma$.
% \end{definition}

Every permutation $\sigma \in S_n$ can be written as a union of vertex disjoint cycles. Hence a permutation  
corresponds to a cycle cover of a graph on $n$ vertices. In this light, the determinant of an $(n \times n)$
matrix $A$ can be seen as a signed sum of cycle covers:
\[
 \mbox{det}(A) = \sum_{\mbox{cycle cover\ } \calC} \mbox{sign}(\calC) \mbox{wt}(\calC)
\]

% \subsection{Background on Complexity Classes}
% For all the standard complexity classes, we refer the reader to \cite{AB09}.
% We define some non-standard complexity classes that we refer to here:
% 
% Let $f = (f_n)_{n \ge 0}$ where $f_n : \{0,1\}^n \to \mathbb{Z}$ be a family 
% of integer valued functions. $f$ is in the complexity class $\Gl$ if and only
% if there is a nondeterministic logspace machine $M$ such that for every $x$,
% $f(x)$ equals the difference  between the number of accepting and rejecting 
% paths of $M$ on input $x$.
% 
% $\Ceql$ is the class of languages $L$ for which there exists a $\Gl$ functions
% $f$ that evaluates to zero on precisely those $x \in L$. The class $\Log^{\Ceql}$
% consists of all languages decidable by a logspace machine at the base with oracle 
% access to a $\Ceql$-complete language. The $\ACz$ hierarchy over $\Ceql$ is built
% as an $\ACz$ circuit with gates which can make queries to a $\Ceql$ oracle and is 
% known to collapse to $\Log^{\Ceql}$. For more information on the subtleties of 
% implementing this hierarchy, see \cite{ABO}.

\subsection{Background on $\mathsf{MSO}$-logic}
% For the definitions of Monadic Second Order logic ($\mathsf{MSO}$) and related
% terminology we refer the reader to Section $2$ of \cite{EJT}.

\begin{definition}[Monadic Second Order Logic]\label{def:mso}
 Let the variables $V = \{v_1, v_2, \ldots, v_n\}$ denote the vertices of a graph $G = (V, E)$.
 Let $X, Y$ denote subsets\footnote{The case when quantification over subset of edges
 is not allowed is referred to as $\mathsf{MSO}_1$ which is known to be strictly less powerful
 than $\mathsf{MSO}_2$, the case when edge set quantification is allowed. Throughtout our paper, we will 
 work with $\mathsf{MSO}_2$ and hence we will just refer to it as $\mathsf{MSO}$.} of $V$ or $E$.
 Let $E(x,y)$ be the predicate that evaluates to $1$ when there 
 is an edge between $x$ and $y$ in $G$. A logical formula $\phi$ is called an $\mathsf{MSO}$-formula if it can be 
 constructed using the following:
 \begin{itemize}
  \item $v \in X$
  \item $v_1 = v_2$
  \item $E(v_1,v_2)$
  \item $\phi_1 \lor \phi_2$, $\phi_1 \land \phi_2$, $\lnot \phi$
  \item $\exists x \phi, \forall x \phi$
  \item $\exists X \phi, \forall X \phi$  
 \end{itemize}

 In addition, if the Gaifman graph\footnote{The Gaifman graph (also called the \emph{Primal Graph}) of
 a binary relation $R \subseteq A \times A$ is the graph whose nodes are elements of $A$  
 and an edge joins a pair of variables $x,y$ if $(x,y) \in R$.}
 of the relation is bounded treewidth then we can use any predicate in 
 item 3 above. A property $\Pi$ of graphs is $\mathsf{MSO}$-definable, if it can be expressed as a 
 $\mathsf{MSO}$ formula $\phi$ such that $\phi$ evaluates to TRUE on a graph $G$ if and only if $G$ has property $\Pi$.
\end{definition}

\begin{definition}[Solution Histogram]
 Given a graph $G = (V, E)$ and an $\mathsf{MSO}$ formula $\phi(X_1, \ldots, X_d)$ in free 
 variables $X_1, \ldots, X_d$, where $X_i \subseteq V$(or $E$), the $(i_1, \ldots, i_d)$-th
 entry of $\mbox{histogram}(G, \phi)$ gives the number of subsets $S_1, \ldots, S_d$ such
 that $|S_j| = i_j$ for which $\phi(S_1, \ldots, S_d)$ is true. 
\end{definition}

We need the following results from \cite{EJT}:

\begin{theorem}[Logspace version of Bodlaender's theorem]
\label{prop:EJTB}
For every $k \ge 1$, there is a Logspace machine that on input of any 
graph $G$ of tree width at most $k$ outputs a width-$k$ tree decomposition of $G$.
\end{theorem}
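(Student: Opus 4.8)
The plan is to follow the classical route to Bodlaender's theorem but to replace its linear-depth recursion (which is unaffordable in $\Log$) by a divide-and-conquer of logarithmic depth, and to organise the whole computation so that no recursion stack is ever stored. I would work in two phases: first produce an \emph{approximate} tree decomposition of width $O(k)$, and then sharpen it to width exactly $k$. The engine of the first phase is the standard fact that a graph of tree width at most $k$ admits, for any vertex weighting $\mu$, a balanced separator $S$ with $|S|\le k+1$ such that every component of $G-S$ carries at most half of $\mu$ (take the centroid bag of a width-$k$ tree decomposition). I would maintain a ``piece'' of the graph together with a boundary set $W$ of vertices it shares with the rest, and at each step find a separator that is simultaneously balanced for the vertex count of the piece and for $|W|$: the union of a count-balanced separator and a $W$-balanced separator, each of size $\le k+1$, has size $\le 2(k+1)$ and, since deleting more vertices only refines components, is balanced for both measures at once. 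Splitting the resulting components into two groups, each carrying at most a constant fraction $<1$ of the vertices, gives a \emph{binary} recursion tree of depth $O(\log n)$; each child inherits boundary $\subseteq W\cup S$, which by $W$-balance has size $\le |W|/2+2(k+1)$, so $|W|\le 4(k+1)$ is an invariant and every bag $W\cup S$ has size $O(k)$.

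The key point is that this recursion runs in $\Log$ without storing the stack or the explicit current piece. Each node of the binary recursion tree is named by its $O(\log n)$-bit address (the sequence of left/right choices from the root). Given an address, membership of a vertex in the corresponding piece is decided by \emph{replaying} the separator choices from the root, a composition of logspace subroutines in which space is reused across the $O(\log n)$ levels; undirected connectivity and component sizes inside the implicitly given induced subgraph are available in $\Log$ by Reingold's theorem. The balanced separator itself is found by \emph{brute force}: there are only $\binom{n}{2(k+1)}=n^{O(k)}$ candidate sets of size $\le 2(k+1)$, indexed by $O(k\log n)$ bits, and the balance condition for each candidate is checkable in $\Log$, so we output the first valid one. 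Emitting, for every address, the bag $W\cup S$ together with the parent/child adjacencies of the recursion tree then yields a width-$O(k)$ decomposition in $\Log$.

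For the second phase I would feed this approximate decomposition into the Bodlaender--Kloks dynamic program, which on a decomposition of width $w=O(k)$ decides whether $\mathrm{tw}(G)\le k$ and, under the promise of the statement, constructs an optimal-width decomposition. Its states are sets of \emph{characteristics}, each a bounded-size trace over the (bounded) bag; after normalising characteristics so that they refer to positions within a bag rather than to vertex identities, there are only $f(k,w)=O(1)$ of them, so each state is one of at most $2^{f(k,w)}$ values and the program is exactly a deterministic bottom-up tree automaton with a constant state set. Evaluating such an automaton on a tree, and reconstructing the decomposition from an accepting run by a top-down pass, is the logspace tree-automaton evaluation underlying Courcelle's theorem in $\Log$, and it delivers the final width-$k$ decomposition.

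I expect the main obstacle to be the first phase: turning the balanced-separator divide-and-conquer into a genuinely logspace procedure, i.e.\ the address-replay mechanism together with the boundary invariant that keeps bags of size $O(k)$, and verifying that a single separator of size $O(k)$ can be made balanced for both the vertex count and the boundary (handled above by taking a union of single-measure separators). The second phase is comparatively routine once one checks that Bodlaender--Kloks characteristics can be made vertex-oblivious and hence finite, which reduces it to the tree-automaton evaluation already known to lie in $\Log$.
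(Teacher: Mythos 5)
First, a point of comparison: the paper does not prove this statement at all. It is imported verbatim from Elberfeld, Jakoby and Tantau \cite{EJT} (``We need the following results from \cite{EJT}''), so there is no in-paper proof to measure you against. Your sketch does follow the broad architecture of the actual proof in \cite{EJT} --- an approximate width-$O(k)$ decomposition built from brute-forced balanced separators of size $O(k)$, followed by a Bodlaender--Kloks-style dynamic program run as a finite-state tree automaton to sharpen the width to exactly $k$ --- so the overall strategy is the right one.

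There is, however, a genuine gap in your first phase, and it sits exactly where you locate the ``main obstacle.'' You name each node of the recursion tree by its $O(\log n)$-bit root-to-node address and propose to decide membership of a vertex in the corresponding piece by \emph{replaying} the separator choices level by level. That is a composition of $\Theta(\log n)$ logspace subroutines, and logspace is only closed under composing a \emph{constant} number of such functions: in the standard simulation, each level of on-demand recomputation needs its own $O(\log n)$-bit input pointer, so the naive replay costs $O(\log^2 n)$ space. Saying ``space is reused across the $O(\log n)$ levels'' asserts the conclusion rather than proving it. The known way out --- and the one \cite{EJT} effectively take --- is to make each piece \emph{canonically nameable by $O(\log n)$ bits of standalone data}, e.g.\ by its boundary set $W$ of size $O(k)$ together with a selection of components of $G-W$, so that membership reduces to a single Reingold reachability query with no history to replay. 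Your invariant $|W|\le 4(k+1)$ is precisely the ingredient needed for such a naming, but your splitting step groups \emph{several} components of $G-(W\cup S)$ into one child, and you never argue that the resulting piece is recoverable from $O(\log n)$ bits without the address history; until that is done, phase one is not a logspace algorithm. Separately, you should be upfront that phase two (normalising Bodlaender--Kloks characteristics to be vertex-oblivious and evaluating the resulting tree automaton in logspace) is itself a substantial portion of \cite{EJT}, not a routine afterthought.
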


\begin{theorem}[Logspace version of Courcelle's theorem]
\label{prop:EJTC}
For every $k \ge 1$ and every $\mathsf{MSO}$-formula $\phi$, there is a  
Logspace machine that on input of any logical structure $\mathcal{A}$ of tree width at
most $k$ decides whether $A \vDash \phi$ holds.
\end{theorem}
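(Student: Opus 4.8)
The plan is to reduce $\mathsf{MSO}$ model checking to the evaluation of a single fixed finite tree automaton on a suitably encoded tree, exploiting the classical Thatcher--Wright/Doner correspondence between $\mathsf{MSO}$-definable properties and regular tree languages. The one genuinely new difficulty over the linear-time (Courcelle) setting is space: we cannot afford a dynamic-programming table indexed by the whole decomposition, so the automaton must be evaluated by a recursion whose depth is only logarithmic.

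First I would invoke Theorem~\ref{prop:EJTB} to compute, in $\Log$, a width-$k$ tree decomposition $T$ of the Gaifman graph of $\mathcal{A}$. Second, I would turn $(\mathcal{A},T)$ into a labeled binary tree over a \emph{finite} alphabet: since every bag has at most $k+1$ elements, the identity of the elements in a bag, the relations that hold among them, and the way a bag overlaps its parent can all be recorded by a constant-size label, after a standard normalization making the decomposition binary and ``nice''. Third, by the effective translation underlying Courcelle's theorem, the assertion ``$\mathcal{A} \vDash \phi$'' becomes ``the encoded tree is accepted by a deterministic bottom-up tree automaton $\mathcal{B}$'', where $\mathcal{B}$ depends only on $\phi$ and $k$ and hence has a constant number of states. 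It thus remains to evaluate a constant-size tree automaton on the encoded tree within $\Log$.

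The key step, and the main obstacle, is that the decomposition tree may have linear depth, so the naive post-order evaluation of $\mathcal{B}$ would use linear space. I would overcome this by first re-balancing the decomposition into one of depth $O(\log n)$ and width $O(k)$ in $\Log$: bounded-treewidth graphs admit such logarithmic-depth decompositions, and the balancing can be carried out by Logspace tree-contraction. To keep the automaton's verdict invariant under contraction, each contracted context (a subtree with one hole) is relabeled by the induced state-transformation $Q\to Q$ of $\mathcal{B}$; these transformations live in a fixed finite monoid, so their composition is computable and the relabeled alphabet stays finite. Finally, on the resulting depth-$O(\log n)$ tree I would compute the root state of $\mathcal{B}$ by a recursive DFS: each node is addressed by its root-to-node path (storable in $O(\log n)$ bits), each recursion frame holds only $O(1)$ bits recording which child is being processed and the already-computed child states, and the recursion depth equals the tree depth $O(\log n)$, for a total of $O(\log n)$ space; whether the root state is accepting decides $\mathcal{A} \vDash \phi$. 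The delicate point throughout is to confine every intermediate object --- the decomposition, the balancing, and the automaton bookkeeping --- to logarithmic space simultaneously; once the logarithmic-depth decomposition is in hand, the constant size of $\mathcal{B}$ makes the evaluation routine.
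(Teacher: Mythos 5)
The paper does not actually prove this statement: Theorem~\ref{prop:EJTC} is imported verbatim from \cite{EJT} and used as a black box, so there is no in-paper proof to compare your attempt against. What you have written is, in essence, a reconstruction of the Elberfeld--Jakoby--Tantau argument itself: logspace computation of a width-$k$ decomposition (Theorem~\ref{prop:EJTB}), re-encoding of the structure together with its decomposition as a tree over a finite alphabet, the Thatcher--Wright/Doner--Courcelle translation of $\phi$ into a fixed deterministic bottom-up tree automaton, and a logspace evaluation of that automaton. The one place where your outline leans on substantial machinery without supplying it is the last step, and there you actually conflate two different finishes, each of which is individually sufficient: (i) if you first produce a width-$O(k)$, depth-$O(\log n)$ decomposition in logspace, then the plain DFS evaluation of the automaton already runs in $O(\log n)$ space and no context relabeling is needed; (ii) alternatively, on the unbalanced tree, evaluating a fixed deterministic bottom-up tree automaton is exactly expression evaluation over a finite algebra, which is in \NCo{} $\subseteq$ \Log{} by Buss's theorem, and that is where the monoid-of-state-transformations bookkeeping you describe really lives. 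Both the logspace balancing of tree decompositions and the logspace (or \NCo) automaton evaluation are genuine theorems that constitute the technical core of \cite{EJT}; as written, your argument is therefore a correct roadmap that names the right obstacles and the right tools rather than a complete self-contained proof, but no step of it is wrong. A minor point worth making explicit if you flesh it out: the finite-alphabet encoding requires assigning each element of $\mathcal{A}$ a local colour in $\{1,\dots,k+1\}$ consistently across all bags containing it, which needs a small argument to carry out in \Log.
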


\begin{theorem}[Cardinality version of Courcelle's theorem]\label{thm:EJTcard}
Let $k \geq 1$ and let $\phi(X_1, \ldots , X_d)$ be an $\mathsf{MSO}$-formula on
free variables $X_1, \ldots , X_d$. Then there is a Logspace machine that on input of 
the tree decomposition of a graph $G$ of treewidth at most $k$, $\mathsf{MSO}$-formula
$\phi$ and $(i_1, \ldots, i_d)$, outputs the value of $\mbox{histogram}(G, \phi)$ at 
$|X_1| = i_1, \ldots, |X_d| = i_d$.
\end{theorem}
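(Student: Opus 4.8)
The plan is to follow the automata-theoretic route to Courcelle's theorem, but to carry a \emph{counting} dynamic program that is simultaneously refined by the cardinalities $|X_1|,\dots,|X_d|$ and executed in \Log. First I would invoke Theorem~\ref{prop:EJTB} to compute, in \Log, a width-$k$ tree decomposition of $G$, turn it into a \emph{nice} (binary, with leaf/introduce/forget/join nodes) decomposition---which is a bounded local rewriting and can also be carried out in \Log---so that every vertex of $G$ is \emph{introduced} at exactly one bag, and encode the result as a labelled binary tree $T$ whose every node $t$ carries its bag $B_t$ (at most $k+1$ vertices) together with the adjacency pattern induced on $B_t$. For fixed $\phi$ and fixed $k$, the classical $\mathsf{MSO}$-to-automaton translation then yields a finite bottom-up tree automaton $\mathcal{A}_\phi$, with a constant number of states $Q$, running over the alphabet that augments each $B_t$ with the \emph{local trace} $(S_1\cap B_t,\dots,S_d\cap B_t)$ of a candidate assignment; $\mathcal{A}_\phi$ accepts the so-labelled $T$ if and only if $\phi(S_1,\dots,S_d)$ holds in $G$.

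Next I would set up the histogram dynamic program. For a node $t$, a state $q\in Q$ and a size vector $(j_1,\dots,j_d)$, let $N_t(q,j_1,\dots,j_d)$ count the assignments of the free sets to the vertices \emph{introduced} in the subtree rooted at $t$ for which the run of $\mathcal{A}_\phi$ on that subtree ends in $q$ and the partial sizes are exactly $j_1,\dots,j_d$. At a leaf these numbers are read off directly; at a join node the value is a sum, over the finitely many locally consistent labellings and over the pairs of child states permitted by the transition function, of the \emph{products} of the two child counts, with the size indices combining by a $d$-fold convolution; introduce and forget nodes merely shift the local trace and (for introduce) bump the appropriate $j_\ell$. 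Because each vertex is introduced once, the cardinalities are charged exactly once, resolving the subtlety noted in the paper's overview. The required output is then $\sum_{q\in F} N_{\mathrm{root}}(q,i_1,\dots,i_d)$ for the accepting set $F$.

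The hard part is keeping all of this inside \Log. The full table has $|Q|\cdot(n{+}1)^d$ entries and individual entries are as large as $\prod_{\ell}\binom{n}{i_\ell}$, so neither the table nor a recursion stack of partial sums can be stored. I would instead observe that the recurrence above is an arithmetic computation over $\mathbb{Z}$ whose dependency graph is $T$ blown up by the constant factor $|Q|$ and by bounded-fan-in $+$ and $\times$ gadgets realising the convolutions, and whose gates handle integers of only polynomially many bits. After rebalancing $T$ to depth $O(\log n)$ in \Log, this becomes a polynomial-size, $O(\log n)$-depth, bounded-fan-in arithmetic circuit, and I would evaluate its single designated output entry by a depth-first traversal that recomputes subresults on demand along the logspace-navigable tree rather than storing them, using that iterated integer addition and iterated integer multiplication are in $\TCz\subseteq\Log$ and that the values never exceed polynomially many bits. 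This final step---arguing that such large but poly-bit counts can be combined in logarithmic space on the balanced decomposition, an arithmetic $\NCo$-style evaluation rather than the bounded-state Boolean evaluation underlying ordinary model checking---is where the real work lies, and is the step I expect to be the main obstacle.
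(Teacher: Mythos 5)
This statement is not proved in the paper at all: it is quoted verbatim as an imported black box from \cite{EJT} (``We need the following results from \cite{EJT}''), so the only meaningful comparison is against the proof in that source. Your outline --- logspace tree decomposition via Theorem~\ref{prop:EJTB}, conversion to a nice decomposition, the $\mathsf{MSO}$-to-tree-automaton translation, and a histogram dynamic program $N_t(q,j_1,\dots,j_d)$ combined by $d$-fold convolutions at join nodes --- is indeed the same architecture that Elberfeld, Jakoby and Tantau use, and the first three stages are fine (modulo the routine point that for $\mathsf{MSO}_2$ the automaton must run over a structure in which edges, not only vertices, are introduced exactly once).

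The gap is precisely the step you flag as ``the main obstacle,'' and the concrete plan you sketch for it does not work. First, the circuit you describe is not an $O(\log n)$-depth bounded-fan-in circuit: each join node's convolution $\sum_{j'+j''=j} N_{t_1}(q',j')\,N_{t_2}(q'',j'')$ has up to $(n+1)^d$ summands, so realising it with bounded fan-in costs $\Theta(\log n)$ depth per join, and composing along the rebalanced tree gives depth $\Theta(\log^2 n)$. Second, even at depth $O(\log n)$, a depth-first traversal that ``recomputes subresults on demand'' needs a stack of $\Theta(\log n)$ gate identifiers of $\Theta(\log n)$ bits each, i.e.\ $\Theta(\log^2 n)$ space; and since the gate values themselves have polynomially many bits, you cannot hold even one intermediate result in memory, so there is nothing for the recursion to return. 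The appeal to iterated addition and iterated multiplication being in \TCz\ handles a single unbounded sum or product but does not compose across a super-constant-depth alternation of sums and products. Closing this gap is the actual technical content of \cite{EJT}: they design a dedicated logspace evaluation procedure for tree-shaped convolution circuits (producing the answer bit by bit without ever materialising intermediate values); alternatively one can restructure the computation as an arithmetic formula and invoke a result of the form $\Sh\NCo \subseteq \FL$ (Caussinus--McKenzie--Th\'erien--Vollmer). As it stands, your argument establishes the reduction to that evaluation problem but not the evaluation itself.
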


\section{Determinant Computation} \label{sec:btwDet}
Given a square $\{0,1\}$-matrix $A$, we can view it as the bipartite adjacency
matrix of a bipartite graph $H_A$. The permanent of this matrix $A$ counts the 
number of perfect matchings in $H_A$, while the determinant counts the signed
sum of perfect matchings in $H_A$.

If $G$ is a bounded treewidth graph then we can count the number of perfect
matchings in $G$ in \Log \cite{EJT} (see also \cite{DDN}). Hence the 
complexity of the permanent of $A$, above is well understood in this case 
while the complexity of computing the determinant is not clear. 

On the other hand the determinant of a $\{0,1\}$-matrix reduces (say by
a reduction $g_{MV}$) to counting
the number of paths in another graph (see e.g. \cite{MV}). Also counting
$s,t$-paths in a bounded treewidth graph is again in $\Log$\ via \cite{EJT}
(see also \cite{DDN}). But the problem with this approach is that that the
graph $g_{MV}(G)$ obtained by reducing a bounded treewidth $G$ is not
bounded treewidth.

However, we can also view $A$ as the adjacency matrix of a directed
graph $G_A$. If $G_A$ has bounded treewidth (which implies that $H_A$ also 
has bounded treewidth, see Proposition~\ref{prop:btwSplit} in Appendix~\ref{sec:split}) 
then we have a way of computing the determinant of $A$. 
To see this, consider the following lemma:

\begin{lemma} \label{lem:thetaX}
There is an $\mathsf{MSO}$-formula $\phi(X, Y)$ with 
free variables $X, Y$ that take values from the set of subsets of vertices
and edges respectively, such that $\phi(X, Y)$ is true exactly when $X$ is 
the set of heads of a cycle cover $Y$ of the given graph.
\end{lemma}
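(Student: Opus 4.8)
The plan is to build $\phi(X,Y)$ as a conjunction of two $\mathsf{MSO}$ conditions: first, that the edge set $Y$ is (the edge set of) a cycle cover of the given graph $G$; and second, that the vertex set $X$ consists of exactly one distinguished vertex per cycle of $Y$, namely the \emph{smallest} vertex of each cycle under a fixed total order $\preceq$. Since a cycle cover as an edge set has $|Y|=n$ and contributes exactly one head per cycle, this forces $|X|$ to equal the number $k$ of cycles, which is precisely the quantity controlling the sign $(-1)^{n+k}$; this is why the heads are singled out, although for the statement of the lemma we only need the formula itself.

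The first condition is routine. Viewing $Y$ as a set of directed edges, ``$Y$ is a cycle cover'' is expressed by saying that every vertex is incident in $Y$ to exactly one out-edge and exactly one in-edge, i.e. $\forall v\,\bigl(\exists! e\,(e\in Y\wedge \mathrm{tail}(e)=v)\bigr)\wedge\bigl(\exists! e\,(e\in Y\wedge \mathrm{head}(e)=v)\bigr)$, which is first-order over the (directed) incidence relation. In the subgraph $(V,Y)$ -- a disjoint union of directed cycles -- two vertices lie on a common cycle iff one is reachable from the other along $Y$-edges, and reachability is the usual $\mathsf{MSO}$ closure predicate $\mathrm{reach}_Y(u,v):\equiv \forall Z\,[(u\in Z\wedge Z\text{ closed under }Y\text{-out-edges})\rightarrow v\in Z]$. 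Assuming a total order $\preceq$ on $V$, I would then define $X$ by $\forall v\,(v\in X \leftrightarrow \forall u\,(\mathrm{reach}_Y(v,u)\rightarrow v\preceq u))$, i.e. $v$ is the $\preceq$-least vertex of its cycle. Conjoining the cycle-cover condition with this definition of $X$ yields $\phi(X,Y)$, which by construction holds exactly when $Y$ is a cycle cover and $X$ is its set of heads.

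The hard part, as flagged in the introduction, is to supply the total order $\preceq$: it must be $\mathsf{MSO}$-definable and yet compatible with bounded tree-width. One cannot simply add the index order $1<2<\cdots<n$ as a relation, since its Gaifman graph is complete, and even adding the successor (Hamiltonian-path) relation can blow up the tree-width. Instead I would derive $\preceq$ from the tree decomposition. Using the Logspace Bodlaender theorem (Theorem~\ref{prop:EJTB}) I obtain a width-$k$ decomposition, refine it to a \emph{nice}, rooted, sibling-ordered decomposition in which every vertex has a unique introduce node, and expand $G$ by this decomposition together with the child/next-sibling and membership relations. A standard argument -- place each bag $X_t$, its parent bag, and the edges inside $X_t$ into a single super-bag indexed by $t$ -- shows this expansion still has tree-width $O(k)$, so $\mathsf{MSO}$ over it is handled by Theorems~\ref{prop:EJTC} and~\ref{thm:EJTcard}, and the histogram is still taken over subsets of $V$ and $E$ of the original graph.

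On a rooted, sibling-ordered tree the depth-first (document) pre-order of the nodes is $\mathsf{MSO}$-definable from the descendant and next-sibling relations, and since each vertex $v$ owns a unique introduce node $t(v)$, I can set $u\preceq v$ iff $t(u)$ precedes $t(v)$ in this pre-order. This is a genuine total order on $V$, it is $\mathsf{MSO}$-definable over the bounded-tree-width expansion, and plugging it into the definition of $X$ above completes $\phi(X,Y)$. I expect the only real obstacle to be this step -- verifying that the chosen order is simultaneously definable, total, and tree-width-preserving (together with checking that the introduce-node assignment genuinely breaks the symmetry among vertices in a common bag) -- whereas the cycle-cover and reachability parts are standard $\mathsf{MSO}$ boilerplate.
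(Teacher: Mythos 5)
Your proposal is correct and follows essentially the same route as the paper: the formula checks that every vertex has in- and out-degree one in $Y$ and that $X$ contains exactly the $\preceq$-least vertex of each cycle (reachability via $Y$), where $\preceq$ is a total order derived from a traversal of the tree decomposition so as to remain MSO-definable and tree-width-compatible --- the paper packages this order as Lemma~\ref{lem:mainION}, built from an Euler traversal with auxiliary bag vertices, and itself notes the construction is folklore. The one caveat in your version is that in a nice decomposition a vertex can have several introduce nodes (one per minimal bag of its connected subtree), so you should key the order to the unique topmost bag (equivalently, the forget node) of each vertex and break ties among vertices sharing that bag arbitrarily, which is harmless for tree-width since such vertices co-occur in a bag --- precisely the tie-breaking the paper performs.
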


Before proving this Lemma we need some preprocessing.
Let $G$ be the input graph of bounded tree-width. We will augment 
$G$ with some new vertices and edges to yield a graph $G'$ again
with a tree decomposition $T'$ of bounded tree-width. Then we have:
\begin{lemma}\label{lem:mainION}
There exists a relation $\ION$ on vertices of $G'$ which satisfies the 
following:
\begin{enumerate}
\item $\ION$ is compatible\footnote{Binary relation $R$ is said to be compatible with the tree 
decomposition $T'$ of $G$ if the Gaifman graph of $R$  
has $T'$ as its tree decomposition.} with the tree decomposition $T'$
\item $\ION$ is a partial order on the vertices of $G'$
\item $\ION$ is computable in \Log
\item The transitive closure $\ION^{*}$ is a total order when restricted
to the vertices of $G$
\item $\ION^{*}$ is expressible as an $\mathsf{MSO}$-formula over the
vocabulary of $G'$ along with $\ION$.
\end{enumerate}
\end{lemma}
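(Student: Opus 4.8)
The plan is to realize $\ION$ as a \emph{local successor} relation --- a ``next-vertex'' pointer chain threaded through the bags of a rooted tree decomposition $T'$ of $G'$ --- rather than as the total order itself. The key conceptual point is that the order we ultimately want is the transitive closure $\ION^{*}$, while $\ION$ itself is kept sparse and confined to individual bags. That sparsity is exactly what buys compatibility (item 1), which in turn is what will let us adjoin $\ION$ to the structure without blowing up the tree width, so that the Logspace versions of Courcelle's theorem and the cardinality theorem (Theorems~\ref{prop:EJTC} and~\ref{thm:EJTcard}) remain applicable to formulas that reference the order, e.g.\ the formula of Lemma~\ref{lem:thetaX}. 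Items 4 and 5 instead concern $\ION^{*}$, which is \emph{allowed} to be global: item 4 asks that it linearly order the original vertices (so that one may speak of the smallest vertex of a cycle), and item 5 asks only that $\ION^{*}$ be recoverable inside $\mathsf{MSO}$ from $\ION$.

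For the construction I would first invoke the Logspace Bodlaender theorem (Theorem~\ref{prop:EJTB}) to obtain a width-$k$ decomposition of $G$, and then convert it, in Logspace, into a rooted \emph{nice} binary tree decomposition in which adjacent bags differ by a single vertex. Fixing the left-to-right order of children induces a canonical depth-first (Euler-tour) traversal of the tree, and hence a linear order on the bags and, via the vertices' \emph{home bags} (the bag nearest the root containing a given vertex), an ``introduction order'' on $V(G)$. The augmentation $G \rightsquigarrow G'$ inserts a bounded number of \emph{dummy} vertices into the bags along each step of the Euler tour, whose sole role is to bridge two consecutive original vertices whose home bags are adjacent in the traversal but which happen not to share a bag; $\ION$ is then declared to be the edge set of the resulting vertex--dummy--$\cdots$--dummy--vertex chain. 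Because each chain step lies inside a single (possibly augmented) bag and only boundedly many dummies are added per bag, $G'$ still has bounded tree width and $G$ remains an induced substructure.

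Verifying the five items then proceeds as follows. \emph{(1)} Every $\ION$-pair is contained in a common bag by construction, and inserting boundedly many dummies preserves the connectivity condition of Definition~\ref{def:tdecomp}, so $T'$ is a tree decomposition of the Gaifman graph of $\ION$. \emph{(3)} Euler-tour and depth-first numbering of a tree, the home-bag computation, and the bounded per-bag bookkeeping are all Logspace, so $\ION$ is Logspace-computable. \emph{(2)/(4)} Since the chain follows the fixed linear traversal order it is acyclic, so $\ION^{*}$ is a strict partial order, and since every pair of original vertices is separated along the traversal, the restriction of $\ION^{*}$ to $V(G)$ is total. \emph{(5)} $\ION^{*}$ is the reachability relation of $\ION$, which is $\mathsf{MSO}$-definable over the expanded vocabulary by the standard least-fixed-point formula
\[
\ION^{*}(x,y)\;\equiv\;\forall X\big[\,(x\in X\ \wedge\ \forall a\,\forall b\,((a\in X \wedge \ION(a,b))\rightarrow b\in X))\ \rightarrow\ y\in X\,\big],
\]
which holds in any structure once $\ION$ is part of the signature (so item 5, somewhat counterintuitively, is the easy part and needs no bounded-width hypothesis).

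The main obstacle is the inherent tension between items 1 and 4: a total order, viewed as a binary relation, has a complete Gaifman graph and hence tree width $n-1$, so $\ION$ \emph{cannot} be taken to be the order. All the difficulty therefore concentrates in engineering a relation that is simultaneously local (compatible with $T'$) and yet whose transitive closure is total --- equivalently, in producing a single next-pointer chain visiting all of $V(G)$ in which consecutive original vertices always co-occur in a bag. Guaranteeing this co-occurrence is precisely what forces the augmentation, and the technically delicate heart of the lemma will be choosing the threading of dummy vertices so that the width stays bounded, the relation stays Logspace-computable, and $G$ remains induced, all at once.
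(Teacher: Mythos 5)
Your proposal is correct in outline and shares its core machinery with the paper's proof: both augment each bag of a (rooted, binary) tree decomposition with $O(1)$ auxiliary vertices threaded along the Euler traversal of the tree, both keep every $\ION$-pair inside a single bag so that compatibility and bounded width are immediate, and both recover the total order as the $\mathsf{MSO}$-definable transitive closure via the standard second-order reachability formula. Where you genuinely diverge is in how $\ION$ is realized. You make $\ION$ a single successor chain --- a Hamiltonian-path-like next-pointer visiting all of $V(G)$ in ``introduction order'' (home-bag preorder, ties broken by binary value), with dummy vertices interpolated whenever consecutive vertices fail to share a bag. The paper instead never builds a chain: its $\ION_1$ only relates each original vertex $v$ to the three marker vertices $b_l,b_0,b_r$ of the bags containing $v$ (sandwiching $v$ according to which children of a bag contain it), proves a separate claim that any pair of original vertices left incomparable by the resulting transitive closure must co-occur in a bag, and then patches exactly those pairs by ordering them by their binary encodings --- a patch that is automatically compatible with $T'$ precisely because of the claim. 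Your route buys a cleaner totality argument (the chain is total by construction, so you avoid the paper's incomparability claim entirely), at the price of a more delicate width analysis: since consecutive home bags in preorder need not be adjacent in the tree, a single bridge may thread dummies through many intermediate bags, and you should make explicit that each bag is traversed by the Euler tour only $O(1)$ times (at most three segments for a binary tree), so that only $O(1)$ bridges, hence $O(1)$ dummies, land in any one bag. With that count spelled out, your argument goes through and yields all five items.
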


The construction of such a relation is fairly straight forward and considered folklore in
the Finite Model Theory literature (See for example Proposition VI.4 in ~\cite{CF12}. Here
we include an proof of Lemma~\ref{lem:mainION} (obtained independently) in the appendix for 
the sake of completeness.
%\input{ION}
%We defer the proof to the Appendix because of lack of space.
\begin{proof}{(of Lemma \ref{lem:thetaX})}
We write an $\mathsf{MSO}$ formula $\phi$ on free variables $X, Y$, such that
$Y \subseteq E$ and $X \subseteq V$, such that $\phi$ evaluates to true on
any set of heads of a cycle cover $S$. The $\mathsf{MSO}$ predicate essentially  
verifies that the subgraph induced by $Y$ indeed forms a cycle cover of $G$.
Our $\mathsf{MSO}$ formula is of the form
\footnote{Note that since we require that for a given $X,Y$, 
every $v \in V$ has a unique $h \in X$, our formula is not monotone, i.e., If $X \subseteq
X'$ are two sets of heads then if $\phi(X,Y)$ is true doesn't imply $\phi(X',Y)$ is 
also true (consider vertices in $X'\setminus X$, since $X' \subseteq X$, they will have
two different $h, h'$ such that the $\mathsf{PATH}$ and $\ION^{*}$ predicates are true
contradicting uniqueness of $h$}: 
\[
\phi(X,Y) \equiv  (\forall v \in V) (\exists! h \in X) [\mathsf{DEG}(v,Y)
 \wedge \mathsf{PATH}(h,v,Y) \wedge (\ION^{*}(h,v) \lor \mathsf{EQ}(h,v))] 
\]
% Here $V_Y = \cup_{e \in Y}{e}$ is the set of vertices on which edges in $Y$
% are incident.
where,
\begin{enumerate}
 \item $\mathsf{DEG}(v,Y)$ is the predicate that says that the in-degree and out-degree
 of $v$ (in the subgraph induced by the edges in $Y$) is $1$.
 \item $\mathsf{PATH}(x, y,Y)$ is the predicate that says that there is a
  path from $x$ to $y$ in the graph induced by edges of $Y$.
 \item $\mathsf{EQ}(h,v) = 1$ iff $h = v$
\end{enumerate}
One can check that all the predicates above are $\mathsf{MSO}$-definable. 
%and can be verified in logspace by using the logspace
% transducer from \cite{EJT} on the $\mathsf{MSO}$ formula obtained in Lemma \ref{lem:order}.
% 
% The degree of all the vertices in the subgraph induced by the edges of $Y$
% is $2$. Finally we verify that every vertex $v \in V(H)$ is reachable by a path
% using the  edges of $Y$ from a unique vertex $x \in X$ and $x$ is the least 
% vertex in every such path. Notice, that the notion of least is based on an 
% arbitrary ordering of the vertices and we can assume an ordering compatible 
% with the tree decomposition to avoid having a relation (with a Gaifman graph)
% of large treewidth. 
\end{proof}
Lemma \ref{lem:thetaX} along with the Fact~\ref{fact:ccsign} yields:
% \begin{fact}\label{fact:ccsign}
% The sign of a cycle cover consisting of $k$ cycles is $(-1)^{n+k}$ where
% $n$ is the number of vertices in the graph.
% \end{fact}
\begin{lemma}\label{lem:ratdet}
 Given an $(n \times n)$ bounded treewidth matrix $A$ with integer entries, there is a Logspace
 algorithm that constructs an $(m \times m)$(where $m = \poly(n)$) matrix $B$ with entries from $\{0,1\}$,
such that $\mbox{det}(A) = \mbox{det}(B)$ and the treewidth of $B$ is the same as
 the treewidth of $A$.
\end{lemma}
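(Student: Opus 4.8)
The plan is to work entirely with the cycle-cover picture of the determinant supplied by Definition~\ref{def:ccover} and Fact~\ref{fact:ccsign}. Reading $A$ as the weighted adjacency matrix of a digraph $G_A$ on $n$ vertices, we have $\mbox{det}(A)=\sum_{\calC}(-1)^{n+k(\calC)}\mbox{wt}(\calC)$, where $\calC$ ranges over cycle covers and $k(\calC)$ is the number of cycles in $\calC$. I would build $B$ as the adjacency matrix of a digraph $G_B$ obtained from $G_A$ by replacing each weighted edge with a gadget made only of $0/1$-edges, arranged so that the signed, weighted sum over cycle covers is preserved term by term. Two elementary sign-preserving moves drive the construction. \emph{First}, adjoining a fresh vertex carrying only a unit self-loop leaves the determinant unchanged: every cycle cover must cover the new vertex by its loop, so cycle covers biject while both the vertex count and the cycle count rise by one, leaving $(-1)^{n+k}$ fixed. \emph{Second}, subdividing an edge $(i,j)$ into a directed path $i\to k_1\to\cdots\to k_r\to j$ and placing a unit self-loop on each $k_\ell$ either routes one cycle through the whole path (contributing the product of the new edge weights and an extra sign factor $(-1)^{r}$) or makes all $k_\ell$ self-loops (recovering the cycle covers that avoid $(i,j)$); hence an \emph{even} number $r$ of internal vertices preserves the effective weight of the edge and an \emph{odd} $r$ negates it.

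Using these moves I would implement an arbitrary integer weight by a two-terminal gadget with terminals $i,j$, every internal vertex self-looped. Parallel routes between $i$ and $j$ add their contributions (a cycle cover uses at most one route, the remaining internal vertices becoming self-loops), while routes composed in series multiply them. Writing $|a_{ij}|$ in binary, I would realize each needed power of two by a series chain of ``doubling'' blocks (each a pair of parallel unit routes) and then sum the selected powers by parallel composition; padding every route with a couple of extra self-looped vertices fixes the parity of its internal-vertex count, so that the whole gadget carries signed weight exactly $a_{ij}$ (even parity for $a_{ij}>0$, odd for $a_{ij}<0$). The crucial bookkeeping fact is that in any cycle cover of $G_A$ every vertex has out-degree one, so exactly $n$ edges are used; thus the per-edge sign factors, combined over the used edges, convert uniformly into the sign $\prod_{\text{used }(i,j)}\mbox{sign}(a_{ij})$ demanded by the corresponding term of $\mbox{det}(A)$, while all self-looped gadget vertices stay sign-neutral. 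Summing over the internal route choices of each gadget therefore reconstructs $\mbox{wt}(\calC)$ with the correct sign, giving $\mbox{det}(B)=\mbox{det}(A)$.

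For the resource bounds, each gadget has $O(\log|a_{ij}|)=\poly(n)$ vertices, so $m=\poly(n)$ and $B\in\{0,1\}^{m\times m}$. Each gadget is series--parallel (treewidth two) and is attached to $G_A$ only through its two terminals $i,j$; since $(i,j)$ is an edge of $G_A$, some bag of the tree decomposition contains both $i$ and $j$, and I would hang the gadget's own decomposition as a subtree there, so the width of the whole decomposition does not grow. Combined with the Logspace tree decomposition of Theorem~\ref{prop:EJTB}, this keeps the treewidth of $B$ equal to that of $A$. The construction is purely local: to emit the gadget for $(i,j)$ one only reads the binary expansion of $a_{ij}$ from the input and writes down a fixed pattern of vertices and edges, and the augmentation of the tree decomposition is equally local, so everything runs in \Log.

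The main obstacle is the sign bookkeeping. Because the sign $(-1)^{n+k(\calC)}$ is a global quantity that every inserted vertex and every self-loop perturbs, the delicate point is to guarantee that the net sign change is the \emph{same} for all cycle covers rather than depending on which edges a given cover uses. The observation that every cycle cover uses exactly $n$ edges, together with the parity control on each gadget route (even internal length for positive weights, odd for negative), is what makes these perturbations telescope into the single intended factor; verifying this uniform cancellation across the simultaneous replacement of all edges is the heart of the argument. Once $B$ is in hand one may feed it to Lemma~\ref{lem:thetaX} and the cardinality version of Courcelle's theorem to compute $\mbox{det}(B)$, but that is a subsequent step rather than part of this reduction.
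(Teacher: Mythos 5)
Your proposal is correct and follows essentially the same route as the paper's proof: a Valiant-style series--parallel gadget built from the binary expansion of each entry, with self-loops on all internal vertices and parity control of the number of internal vertices on each route so that the global sign $(-1)^{m+k}$ telescopes back to $(-1)^{n+k}\prod\mathrm{sign}(a_{ij})$, plus a local attachment of each gadget's width-$2$ decomposition below a bag containing both endpoints. Your explicit treatment of negative entries via odd-parity routes is in fact slightly more careful than the paper's, which enforces even parity throughout and leaves the sign of negative weights implicit.
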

Thus, using the histogram version of Courcelle's theorem from \cite{EJT} and 
Lemma~\ref{lem:ratdet}, we get:
\begin{theorem}\label{thm:btwDet}
The determinant of a matrix $A$ with integer entries, which can be viewed as the 
adjacency matrix of a weighted directed graph of bounded treewidth, is in \Log.
\end{theorem}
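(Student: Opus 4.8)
The plan is to reduce Theorem~\ref{thm:btwDet} to the cardinality version of Courcelle's theorem (Theorem~\ref{thm:EJTcard}) by way of Lemma~\ref{lem:ratdet}, which lets me assume the matrix is $\{0,1\}$-valued without changing the determinant or the treewidth. So first I would invoke Lemma~\ref{lem:ratdet} to replace the integer matrix $A$ by an $m \times m$ matrix $B$ with $m = \poly(n)$, entries in $\{0,1\}$, the same treewidth, and $\det(A) = \det(B)$; the remaining task is to compute $\det(B)$ for a $\{0,1\}$ bounded-treewidth matrix in Logspace. Since $B$ is $\{0,1\}$, every edge has weight $1$, so the weight of every cycle cover is $1$ and the determinant collapses to a signed count: by the combinatorial interpretation and Fact~\ref{fact:ccsign}, $\det(B) = \sum_{\calC} \mathrm{sign}(\calC) = \sum_{\calC} (-1)^{m+k(\calC)}$, where $k(\calC)$ is the number of cycles in the cycle cover $\calC$.

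The key idea is that the sign of a cycle cover depends only on the \emph{number} of cycles, and that number equals the number of chosen ``head'' vertices in Lemma~\ref{lem:thetaX}. Concretely, I would apply Lemma~\ref{lem:thetaX} to the graph $G_B$ underlying $B$ to obtain the $\mathsf{MSO}$-formula $\phi(X, Y)$ that holds exactly when $X$ is the set of heads of a cycle cover $Y$. Because exactly one head is selected per cycle, a satisfying pair $(X, Y)$ with $|X| = i$ corresponds precisely to a cycle cover with $k(\calC) = i$ cycles. Feeding $\phi$ and the bounded-width tree decomposition of $G_B$ (obtainable in Logspace by Theorem~\ref{prop:EJTB}) into Theorem~\ref{thm:EJTcard}, I can for each target value $i$ compute the entry of $\mathrm{histogram}(G_B, \phi)$ summed over all admissible $|Y|$, i.e.\ the number $N_i$ of cycle covers with exactly $i$ cycles. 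Then I would assemble the answer as
\[
\det(B) = \sum_{i=1}^{m} (-1)^{m+i} N_i,
\]
a Logspace post-processing step since there are only $m+1$ values of $i$, each computed by an independent Logspace subroutine, and Logspace is closed under such composition.

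The main obstacle I anticipate is bookkeeping around the free variables of the histogram: Theorem~\ref{thm:EJTcard} reports counts indexed by the cardinalities of \emph{all} free variables simultaneously, so with free variables $X$ and $Y$ I would be getting counts indexed by $(|X|, |Y|)$, and I must sum out the $|Y|$ coordinate to isolate $N_i = \sum_{j} \mathrm{histogram}(G_B, \phi)[i, j]$. This requires that the range of $|Y|$ be polynomially bounded (it is, since $|Y| \le |E(G_B)| \le m^2$) so that the summation stays in Logspace. A subtler point is ensuring that the histogram genuinely counts cycle covers without overcounting: since $\phi(X, Y)$ fixes $X$ to be \emph{the} head set of $Y$ and each cycle contributes exactly one head, distinct cycle covers $Y$ give distinct satisfying pairs, and each $Y$ has a unique admissible $X$, so the count is exact. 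The remaining verification that the entire pipeline composes within Logspace is routine given Theorems~\ref{prop:EJTB}, \ref{thm:EJTcard} and Lemma~\ref{lem:ratdet}.
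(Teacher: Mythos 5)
Your proposal is correct and follows essentially the same route as the paper's own proof: reduce to a $\{0,1\}$ matrix via Lemma~\ref{lem:ratdet}, apply the histogram version of Courcelle's theorem to the formula $\phi(X,Y)$ of Lemma~\ref{lem:thetaX}, and recover the determinant as the alternating sum over the number of heads $|X|$. Your extra care about summing out the $|Y|$ coordinate is harmless but unnecessary, since every cycle cover of an $m$-vertex graph has exactly $|Y|=m$ edges, which is the observation the paper makes in its one-line proof.
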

\begin{proof}
Firstly, obtain the matrix $B$ from $A$ using Lemma~\ref{lem:ratdet}. The histogram version
of Courcelle's theorem as described in \cite{EJT} when applied to the formula $\phi(X,Y)$ above
yields the number of cycle covers of $G_B$ parametrized on $|X|, |Y|$. But in the notation of
Fact~\ref{fact:ccsign} above, $|X| = k$ and $|Y| = n$, so we can easily compute the determinant
as the alternating sum of these counts.
\end{proof}
\begin{corollary}{\label{cor:charpoly}}
There is a Logspace algorithm that takes as input a $(n \times n)$ bounded treewidth 
matrix $A$, $1^m$, where $1 \leq m \leq n$ and computes the coefficient of $x^m$ in 
the characteristic polynomial ($\chi_A(x) = det(xI - A)$) of $A$.
\end{corollary}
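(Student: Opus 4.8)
The plan is to compute $\chi_A(x) = \det(xI - A)$ by evaluation and interpolation, using Theorem~\ref{thm:btwDet} as a black box. Since $A$ is $(n \times n)$, the polynomial $\chi_A$ is monic of degree exactly $n$, so it is determined by its values at the $n+1$ integer points $x_0 \in \{0, 1, \ldots, n\}$. For each fixed integer $x_0$ the matrix $x_0 I - A$ still has integer entries, and its underlying undirected graph differs from that of $A$ only by (possibly) self-loops on the diagonal, which do not change the tree-width. Hence $x_0 I - A$ is again a bounded tree-width integer matrix, and $D_{x_0} := \det(x_0 I - A) = \chi_A(x_0)$ is computable in \Log\ by Theorem~\ref{thm:btwDet}. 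First I would run this determinant transducer for each of the $n+1$ choices of $x_0$.

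Next I would recover the coefficient of $x^m$ by Lagrange interpolation. Writing $\chi_A(x) = \sum_j D_j \prod_{l \neq j} (x - l)/(j - l)$, the coefficient $c_m$ of $x^m$ is a fixed $\mathbb{Q}$-linear combination $c_m = \sum_{j=0}^n \mu_{m,j} D_j$ whose rational weights $\mu_{m,j}$ depend only on $m$, $n$ and the interpolation nodes $0, \ldots, n$. Clearing a common denominator $\Delta$ (a product of small factorials, hence \Log-computable), each $\Delta\mu_{m,j}$ is an integer, and $c_m = \Delta^{-1} \sum_j (\Delta\mu_{m,j}) D_j$ is itself an integer, since the characteristic polynomial of an integer matrix has integer coefficients. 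The right-hand side is an iterated sum of products of integers followed by one integer division, and iterated integer addition, iterated integer multiplication, and integer division are all computable in \Log; so assembling $c_m$ from the $D_j$ stays within \Log.

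The point that needs care --- and the main obstacle --- is that each $D_j$ has polynomially many bits, so the $n+1$ values cannot all be held on a logarithmic work tape at once. This is resolved precisely because \Log\ is closed under composition: rather than storing the $D_j$, the interpolation routine requests individual bits of $D_j$ on demand and the determinant subroutine recomputes them, reusing its work space across the $n+1$ values and across the bit positions. Verifying the two ingredients this relies on --- that a diagonal shift preserves bounded tree-width (so Theorem~\ref{thm:btwDet} applies to every $x_0 I - A$), and that the interpolation arithmetic is genuinely in \Log\ --- is then routine.

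As an alternative that stays closer to the combinatorial core of Theorem~\ref{thm:btwDet}, one could instead express $c_{n-m} = (-1)^m \sum_{|S| = m} \det(A[S])$ directly as a signed, weighted count of \emph{partial} cycle covers (sets of vertex-disjoint cycles covering exactly $m$ vertices), adapt the formula $\phi(X,Y)$ of Lemma~\ref{lem:thetaX} to such partial covers, and read off the required counts from the cardinality version of Courcelle's theorem (Theorem~\ref{thm:EJTcard}) parametrised by the number of covered vertices, the number of cycles, and the number of edges. I expect this route to reproduce most of the head-selection and weighting machinery of the main theorem, so I would present the interpolation argument as the economical corollary.
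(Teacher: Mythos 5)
Your primary argument (evaluation at the integer points $0,1,\dots,n$ followed by Lagrange interpolation) is correct, but it is a genuinely different route from the paper's. The paper stays inside the MSO/histogram framework: it places a self-loop of weight $x$ on every vertex (splitting a diagonal entry $x-d$ into a weight-$(-d)$ gadget in parallel with the weight-$x$ loop), and then invokes Lemma~\ref{lem:detsup} with one extra histogram parameter recording how many weight-$x$ self-loops a cycle cover uses; the coefficient of $x^r$ is read off from the entries of that histogram. Your ``alternative'' sketch via partial cycle covers is essentially this same idea --- completing a partial cover of $n-r$ vertices by $r$ weight-$x$ loops is exactly how the paper realises $c_{n-r}=\pm\sum_{|S|=n-r}\det(A[S])$ --- so you correctly anticipated the paper's machinery even while choosing not to present it. What the interpolation route buys is modularity: Theorem~\ref{thm:btwDet} is used strictly as a black box on the integer matrices $x_0I-A$ (whose tree-width is indeed unchanged by a diagonal shift), and the rest is standard \Log\ arithmetic --- the weights $\mu_{m,j}$ come from coefficients of $\prod_{l\neq j}(x-l)$, obtainable by an iterated integer product with padded exponents, and iterated sum/product and exact integer division are in \Log\ by \cite{HAB}, with \Log-composability handling the fact that the $D_j$ are too long to store. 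What the paper's route buys is that a single histogram query yields all coefficients simultaneously and generalises directly (e.g.\ to the constant-size-support setting of Lemma~\ref{lem:detsup}); it also sidesteps any reliance on the interpolation weights being \Log-computable. Both arguments are sound; yours is arguably the more economical derivation of the corollary from the main theorem.
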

% \begin{proof}
% While counting the number of cycle covers with $k$
% cycles, we can keep track of the number of self-loops
% occurring in a cycle cover. Hence we can also compute the 
% characteristic polynomial in \Log. 
% \end{proof}
The characteristic polynomial of an $(n \times n)$ matrix 
$A$ is the determinant of the matrix $A(x) = xI - A$. We could use 
Theorem~\ref{thm:btwDet} to compute this quantity (since $A(x)$
is bounded treewidth, if $A$ is bounded treewidth). However, Theorem
~\ref{thm:btwDet} holds only for matrices with integer entries while
the matrix $A(x)$ contains entries in the diagonal involving the 
indeterminate $x$. 

We proceed as follows: In the directed graph corresponding to $A$, 
replace a self loop on a vertex of weight $x-d$ by a gadget of weight $-d$
in parallel with a self loop of weight $x$ (In the event that there is no self
loop on a vertex in $A$, add a self loop of weight $x$ on the vertex).
Replace the weights on the other edges according to the gadget in 
Lemma~\ref{lem:ratdet}. We have added exactly $n$ self loops, each of 
weight $x$ (for the original vertices of $A$). 

% From Lemma~\ref{lem:ratdet}, we first obtain a
% matrix $B$ such that det
 We first consider a generalisation of the determinant of $\{0,1\}$-matrices of
 bounded tree-width viz. the determinant of matrices where the entries are from a 
 set whose size is a fixed universal constant and the underlying graph consisting
 of the non-zero entries of $A$ is of bounded tree-width.
\begin{lemma}\label{lem:detsup}
Let $A$ be a matrix whose entries belong to a set $S$ of
fixed size independent of the input or its length. If the underlying digraph 
with adjacency matrix $A'$, where $A'_{ij} = 1$ iff $A_{ij} \neq 0$, is of 
bounded tree-width then the determinant of $A$ can be computed in \Log.
\end{lemma}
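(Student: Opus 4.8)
The plan is to adapt the cycle-cover counting argument behind Theorem~\ref{thm:btwDet} so that, in addition to the number of cycles, it records how many edges of each possible weight a cycle cover uses. Write $S = \{s_1, \dots, s_c\}$ with $c = |S|$ a fixed constant. I would color every edge of the support digraph $G'$ of $A'$ by the index $\ell$ of the entry $s_\ell$ it carries in $A$. Since $c$ is constant, this amounts to enriching the structure with $c$ unary edge-predicates $W_1, \dots, W_c$, where $W_\ell(e)$ holds iff the entry of $A$ on edge $e$ equals $s_\ell$. Adding unary predicates leaves the Gaifman graph of the structure unchanged, so $G'$ together with these labels still has bounded tree-width, and the preprocessing of Lemma~\ref{lem:mainION} (and the tree decomposition supplied by Theorem~\ref{prop:EJTB}) carries over verbatim. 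Note that the $\{0,1\}$ case is recovered at $c=2$.

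Next I would lift the formula $\phi(X,Y)$ of Lemma~\ref{lem:thetaX} to a formula $\psi(X, Y_1, \dots, Y_c)$ on $c+1$ free set variables, defined by
\[
\psi(X, Y_1, \dots, Y_c) \equiv \phi\Big(X,\ \textstyle\bigcup_{\ell=1}^c Y_\ell\Big) \ \wedge\ \bigwedge_{\ell=1}^c (\forall e \in Y_\ell)\, W_\ell(e),
\]
where substituting the union $\bigcup_\ell Y_\ell$ for the free edge-set variable of $\phi$ is realized in the standard way by an existentially quantified witness set, and is therefore $\mathsf{MSO}$-definable. Because the $W_\ell$ partition the edge set, a tuple $(X, Y_1, \dots, Y_c)$ satisfies $\psi$ exactly when $Y := \bigcup_\ell Y_\ell$ is a cycle cover with head set $X$ and each $Y_\ell$ is precisely the set of color-$\ell$ edges of that cover; in particular the map from cycle covers to satisfying tuples is a bijection. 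Thus $|X| = k$ is the number of cycles, $|Y_\ell| = m_\ell$ counts the edges of weight $s_\ell$, and $\sum_\ell m_\ell = n$.

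I would then invoke the cardinality version of Courcelle's theorem (Theorem~\ref{thm:EJTcard}) with the $c+1$ free variables of $\psi$ to compute, for each profile $(k, m_1, \dots, m_c)$, the number $N(k, m_1, \dots, m_c)$ of cycle covers with that profile, each such call running in \Log. Supplying the sign $(-1)^{n+k}$ and weight $\prod_\ell s_\ell^{m_\ell}$ of such covers via Fact~\ref{fact:ccsign}, the determinant is recovered as
\[
\det(A) = \sum_{k=1}^{n}\ \sum_{m_1 + \cdots + m_c = n} (-1)^{n+k}\, N(k, m_1, \dots, m_c)\, \prod_{\ell=1}^{c} s_\ell^{m_\ell}.
\]
Since $c$ is a fixed constant, the number of profiles is at most $(n+1)^{c+1} = \poly(n)$, so the outer sum ranges over only polynomially many terms and can be enumerated by a Logspace transducer.

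The main point to verify is that this final arithmetic stays in \Log. The counts $N(\cdot)$ and the monomials $\prod_\ell s_\ell^{m_\ell}$ may be exponentially large (or large rationals when the $s_\ell$ are rational), but each monomial is an iterated product of constantly many fixed powers, and the whole expression is an iterated sum of products; all of these operations lie in $\TCz \subseteq \Log$, and composing them with the Logspace counts by the usual Logspace-composition machinery keeps the entire computation in \Log. The crux of the argument — and the only place where the hypothesis $|S| = O(1)$ is genuinely needed — is that a constant number of colors yields a constant number of free set variables in $\psi$, hence polynomially many histogram entries; were $|S|$ allowed to grow, both the number of weight profiles and the arity handed to Theorem~\ref{thm:EJTcard} would be unbounded and the reduction would collapse.
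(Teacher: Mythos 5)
Your proposal is correct and follows essentially the same route as the paper: both partition the edge-set variable $Y$ of the cycle-cover formula into $|S|$ free set variables according to unary value predicates on the edges, invoke the cardinality version of Courcelle's theorem on the resulting $(|S|+1)$-variable formula, and recover the determinant by summing the histogram entries weighted by $(-1)^{n+k}\prod_\ell s_\ell^{m_\ell}$. Your write-up is if anything a bit more careful than the paper's (explicitly noting the bijection between cycle covers and satisfying tuples, the polynomial bound on the number of profiles, and that the final arithmetic stays in Logspace), but the underlying argument is identical.
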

\begin{proof}
Let $s = |S|$ be a universal constant, $S = \{c_1,\ldots,c_s\}$
 and let $\mbox{val}_i$ be the predicates that partitions the edges of 
$G$ according to their values i.e.  $\mbox{val}_i(e)$ is true iff the edge $e$ has 
value $c_i \in S$.
Our modified formula $\psi(X,Y_1,\ldots,Y_s)$ will contain $s$ unquantified 
new edge-set variables $Y_1,\ldots, Y_s$ along with the old vertex variable $X$,
and is given by:
\[
\forall{e \in E}{\left(\left(e \in Y_i\rightarrow \mbox{val}_i(e)\right) \wedge
(e \in Y \leftrightarrow \vee_{i=1}^s{(e \in Y_i)}\wedge \phi(X,Y)\right)}
\]
Notice that we verify that the edges in the set $Y_i$ belong to 
the $i^{\mbox{th}}$ partition and each eadge in $Y$ is in one of the $Y_i$'s.
The fact that the $Y_i$'s form a partition of $Y$ follows from the assumption
that $\mbox{val}_i(e)$ is true for exactly one $i \in [s]$ for any edge $e$.

To obtain the determinant we consider the histogram parameterised on the $s$
variables $Y_1,\ldots,Y_s$ and the heads $X$. For an entry indexed by
$x,y_1,\ldots,y_s$, we multiply the entry by 
$(-1)^{n+x} \prod_{i = 1}^s{{c_i}^{y_i}}$ and take a sum over all entries.
\end{proof}
In light of the Lemma above, we can compute the characteristic polynomial as follows:
\begin{proof}{(of Corollary~\ref{cor:charpoly})}
While counting the number of cycle covers with $k$
cycles, we can keep track of the number of self-loops
occurring in a cycle cover. It is easy to see that we can
obtain the coefficient of $x^r$ in the characteristic polynomial
from the histogram outlined in Lemma~\ref{lem:detsup}.
Hence we can also compute the 
characteristic polynomial in \Log. 
\end{proof}
\begin{corollary}{\label{cor:rank}}
There is a Logspace algorithm that takes as input a bounded treewidth $(n \times n)$ matrix $A$,
and computes the rank of $A$.
\end{corollary}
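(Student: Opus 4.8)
The plan is to reduce the rank to a single characteristic-polynomial computation and then invoke Corollary~\ref{cor:charpoly}. The naive attempt would read the rank off $\chi_A(x)=\det(xI-A)$ directly, using that $\mathrm{rank}(A)=n$ minus the multiplicity of $0$ as a root. This fails: the multiplicity of $0$ in $\chi_A$ is the \emph{algebraic} multiplicity of the eigenvalue $0$, which can strictly exceed its \emph{geometric} multiplicity $\dim\ker A=n-\mathrm{rank}(A)$ whenever $A$ has a nontrivial nilpotent part. To repair this I would pass to the symmetric dilation
\[
  M=\mat{0}{A}{A^{T}}{0},
\]
a $(2n\times 2n)$ symmetric matrix, and compute the rank of $M$ instead.

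Three facts drive the argument. First, $\mathrm{rank}(M)=2\,\mathrm{rank}(A)$: since $M\binom{u}{v}=\binom{Av}{A^{T}u}$ vanishes iff $Av=0$ and $A^{T}u=0$, we get $\dim\ker M=\dim\ker A+\dim\ker A^{T}=2(n-\mathrm{rank}(A))$, whence $\mathrm{rank}(M)=2n-\dim\ker M=2\,\mathrm{rank}(A)$. Second, as $M$ is real symmetric it is diagonalizable, so the algebraic and geometric multiplicities of the eigenvalue $0$ coincide; consequently the multiplicity $m_{0}$ of $0$ as a root of $\chi_{M}$ equals $\dim\ker M=2n-2\,\mathrm{rank}(A)$, i.e.\ $m_{0}$ is exactly the number of trailing zero coefficients of $\chi_{M}$ (the least index $m$ whose coefficient of $x^{m}$ is nonzero). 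Third --- and this is the point that makes everything go through --- $M$ has bounded tree-width. The tempting multiplicative shortcut $A^{T}A$, which is symmetric and also has rank $\mathrm{rank}(A)$, does \emph{not} preserve bounded tree-width, whereas the additive dilation $M$ does: given a width-$k$ tree decomposition of the underlying graph of $A$ (obtainable in Logspace by Theorem~\ref{prop:EJTB}), replace each bag $B$ by $\{R_i,C_i : i\in B\}$, where $R_i,C_i$ are the two copies of vertex $i$ occurring in $M$. The three tree-decomposition axioms are routine to verify and the width is at most $2k+1$, so $M$ is bounded tree-width and is constructible from $A$ in Logspace.

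With these in hand the algorithm is immediate: construct $M$ (first scaling $A$ to clear denominators so that $M$ has integer entries, which changes no rank), and for $m=0,1,2,\dots$ apply Corollary~\ref{cor:charpoly} to $M$ to compute the coefficient of $x^{m}$ in $\chi_{M}$, halting at the least $m_{0}$ for which this coefficient is nonzero; then output $\mathrm{rank}(A)=n-m_{0}/2$ (note $m_{0}$ is even since $\mathrm{rank}(M)$ is even). Each coefficient is computed in Logspace, and the enclosing search over $m$ with comparison against zero is a Logspace loop calling a Logspace subroutine, so by closure of Logspace under composition the whole procedure is in Logspace. The main obstacle --- and the only genuinely nontrivial step --- is recognizing that one must symmetrize before using the characteristic-polynomial machinery, and that it is precisely the additive symmetric dilation $M$ (not $A^{T}A$) that simultaneously forces algebraic multiplicity to equal geometric multiplicity and keeps the tree-width bounded.
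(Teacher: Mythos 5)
Your proof is correct, and it diverges from the paper's own argument in an instructive way: the paper's proof of Corollary~\ref{cor:rank} is exactly the ``naive attempt'' you rejected. It computes $\chi_A$ and declares the rank to be the $r$ for which $x^{n-r}$ is the lowest-order nonzero term; as you observe, that identifies $n-r$ with the \emph{algebraic} multiplicity of the eigenvalue $0$, which equals $\dim\ker A$ only when $A$ has no nilpotent part. A $2\times 2$ nilpotent Jordan block has $\chi_A(x)=x^2$ and rank $1$, so the paper's recipe would output $0$. Your repair --- passing to the symmetric dilation $M=\mat{0}{A}{A^{T}}{0}$, for which diagonalizability forces algebraic and geometric multiplicities of $0$ to coincide, using $\mathrm{rank}(M)=2\,\mathrm{rank}(A)$, and checking that the dilation preserves bounded tree-width (essentially the paper's own Proposition~\ref{prop:btwSplit}) --- is sound, and each ingredient you supply (the kernel computation, the evenness of $m_0$, the Logspace search over coefficients of $\chi_M$ via Corollary~\ref{cor:charpoly} and closure under composition) checks out. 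Amusingly, the dilation you introduce is precisely the matrix $B$ that the paper constructs in the proof of Corollary~\ref{cor:fsle}, the one place Corollary~\ref{cor:rank} is actually invoked, and there it is applied only to that symmetric $B$; so the paper's downstream use is safe, but its stated proof of Corollary~\ref{cor:rank} for general square $A$ has the gap you closed. The only costs of your route are a factor-of-two blowup in dimension and roughly a doubling of the width, both harmless for the Logspace bound.
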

%\subsection{Proof of Corollary~\ref{cor:rank}}
\begin{proof}
Compute the characteristic polynomial of $A$ and use the fact that
the rank of $A$ is a number $r$ such that $x^{n-r}$ is the smallest
power of $x$ wth a non-zero coefficient.  
\end{proof}
$\mathsf{FSLE}(A,b)$ is the following problem: Given a system of $m$ linear equations (with integer 
coefficients, w.l.o.g) in variables $z_1, \ldots, z_n$ and a target vector $b$, we want to check if 
there is a feasible solution to $Az = b$. That is, we want to decide if there is a setting of
the variable vector $z \in \mathbb{Q}^n$ such that, $Az = b$ holds for a bounded treewidth matrix 
$A \in \mathbb{Z}^{m \times n}$ (when we say a rectangular matrix is bounded treewidth, we mean
the underlying bipartite graph on $(m+n)$ vertices has bounded treewidth).
\begin{corollary}{\label{cor:fsle}}
 For a bounded treewidth matrix $A_{m \times n}$ and vector $b_{n \times 1}$, $\mathsf{FSLE}(A,b)$ is in $\Log$. 
\end{corollary}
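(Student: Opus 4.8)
The plan is to reduce feasibility to a rank comparison via the Rouch\'e--Capelli criterion: the system $Az = b$ is solvable over $\mathbb{Q}$ if and only if $\text{rank}(A) = \text{rank}([A \mid b])$, where $[A \mid b]$ denotes the augmented matrix. Since Corollary~\ref{cor:rank} already supplies a Logspace procedure for the rank, it suffices to compute both of these ranks in Logspace and then test their equality, the latter being an obviously Logspace comparison.

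The difficulty is that Corollary~\ref{cor:rank} is stated for square matrices whereas $A$ is rectangular, so first I would pass to the symmetric embedding $M = \left(\begin{smallmatrix} 0 & A \\ A^T & 0 \end{smallmatrix}\right)$. This $M$ is precisely the (weighted) adjacency matrix of the bipartite graph underlying $A$, which has bounded treewidth by hypothesis; hence $M$ is a square bounded-treewidth matrix and Corollary~\ref{cor:rank} applies to it verbatim. A standard fact from the singular value decomposition --- the nonzero eigenvalues of $M$ are exactly $\pm\sigma_i$ for the singular values $\sigma_i$ of $A$, and rank is preserved under the field extension $\mathbb{Q} \subseteq \mathbb{R}$ --- yields $\text{rank}(M) = 2\,\text{rank}(A)$, so $\text{rank}(A)$ is recovered by halving.

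The same embedding applied to $[A \mid b]$ computes $\text{rank}([A \mid b])$, provided $[A \mid b]$ is still bounded treewidth. Forming $[A \mid b]$ adds a single new column-vertex $w$ to the bipartite graph, joined to the row-vertices $i$ for which $b_i \neq 0$. Adding one vertex raises the treewidth by at most one: inserting $w$ into every bag of a width-$k$ tree decomposition of the original graph increases each bag by one (so the width becomes at most $k+1$), and since $w$ now occurs in every bag, each edge incident to $w$ has both endpoints in a common bag, so the decomposition remains valid. Thus $[A \mid b]$ is bounded treewidth and its rank is obtained as above.

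The only genuinely non-routine step is confirming that the symmetric embedding simultaneously preserves bounded treewidth and exactly doubles the rank; both are settled by the two observations above --- the treewidth because $M$ is literally the adjacency matrix of an already-bounded-treewidth graph, the doubling by the singular value decomposition. With $\text{rank}(A)$ and $\text{rank}([A \mid b])$ both computed in Logspace, the machine accepts precisely when they are equal.
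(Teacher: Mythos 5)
Your proposal is correct and follows essentially the same route as the paper: both reduce feasibility to the Rouch\'e--Capelli rank test and compute the rank of the rectangular matrices via the symmetric embedding $\left(\begin{smallmatrix} 0 & A \\ A^T & 0 \end{smallmatrix}\right)$, whose rank is $2\,\mathrm{rank}(A)$, using Corollary~\ref{cor:rank}. You additionally spell out two details the paper leaves implicit --- that appending the column $b$ adds one vertex to the bipartite graph and so raises the treewidth by at most one, and the justification for the rank-doubling --- but these are refinements of the same argument rather than a different one.
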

\begin{proof}
 $A$ can be interpreted as the biadjacency matrix of a bipartite graph. Now,
 consider the matrix $B = \mat{0}{A}{A^T}{0}$ -- this is a matrix of dimension $(m+n)\times(m+n)$.
 It is easy to see that $B$ corresponds to the adjacency matrix of $A$. Let
 $\mbox{row-rank}(A) = \mbox{column-rank}(A) = r$. Since $A$ and $A^T$ have the same
 rank, $\mbox{rank}(A) + \mbox{rank}(A^T) = 2r = \mbox{rank}(B)$. Therefore, in order to
 find the rank of the rectangular matrix $A$, we can use the Logspace procedure for 
 matrix rank given by Corollary~\ref{cor:rank}. Now, we know that the system of linear
 equations given by $A, b$ is feasible if and only if $\mbox{rank}(A) = \mbox{rank}([A : b])$. 
\end{proof}
\begin{corollary}{\label{cor:inverse}}
There is a Logspace algorithm that takes as input a $(n \times n)$ bounded treewidth
matrix $A$, $1^i, 1^j, 1^k$ and computes the $k$-th bit of $A^{-1}_{ij}$.  
\end{corollary}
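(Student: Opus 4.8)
The plan is to reduce the computation of a single entry of $A^{-1}$ to two applications of the determinant algorithm of Theorem~\ref{thm:btwDet} through Cramer's rule, followed by a bit-extraction from a ratio of two polynomially long integers. First I would reduce to the integer case: if $A \in \mathbb{Q}^{n \times n}$, write $A = \frac{1}{q}M$, where $q$ is a common denominator of the entries of $A$ (an integer of polynomially many bits, computable in \Log\ since forming the requisite iterated products is in \Log) and $M \in \mathbb{Z}^{n \times n}$ has exactly the same support as $A$, hence the same underlying undirected graph and the same treewidth. Since $A^{-1} = q\,M^{-1}$, it suffices to extract bits of $M^{-1}_{ij}$ for an integer, invertible, bounded-treewidth matrix $M$.

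Next I would invoke Cramer's rule. The $j$-th column of $M^{-1}$ is the solution $x$ of $Mx = e_j$, and $x_i = \det(M^{(i)})/\det(M)$, where $M^{(i)}$ denotes $M$ with its $i$-th column replaced by the standard basis vector $e_j$; hence $M^{-1}_{ij} = \det(M^{(i)})/\det(M)$. The denominator $\det(M)$ is computed directly in \Log\ by Theorem~\ref{thm:btwDet}. For the numerator I would argue that $M^{(i)}$ is again of bounded treewidth: its Gaifman graph is obtained from that of $M$ by deleting the edges $\{k,i\}$ for $k \neq j$ (the zeroed entries of column $i$) and possibly inserting the single edge $\{i,j\}$ (the entry set to $1$). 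Deleting edges never increases treewidth, and inserting one edge increases it by at most $1$: taking an optimal tree decomposition of $M$ and bags $B_a \ni i$, $B_b \ni j$, one adds $i$ to every bag on the path from $a$ to $b$ in the decomposition tree, which keeps the bags containing $i$ connected, places $i$ and $j$ together in $B_b$, and enlarges each bag by at most one. Thus $\mathrm{tw}(M^{(i)}) \leq \mathrm{tw}(M) + 1$ remains a fixed constant, and $\det(M^{(i)})$ is computable in \Log\ by Theorem~\ref{thm:btwDet} as well.

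Finally, both $N := \det(M^{(i)})$ and $D := \det(M)$ are integers of polynomially many bits (Hadamard's bound), produced by the preceding \Log\ stages. The $k$-th bit of the rational $N/D$ is the parity of $\lfloor 2^{k} N / D \rfloor$ (with the integer part treated analogously), and since integer division and bit selection on polynomially long inputs lie in \Log\ (indeed in \TCz), this last step is in \Log. As \Log\ is closed under composition, the entire procedure runs in \Log, after scaling the answer by the factor $q$ from the first reduction.

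The main obstacle is the treewidth-preservation step: one must be sure that replacing a column of $M$ by $e_j$ keeps the underlying graph of bounded treewidth, so that Theorem~\ref{thm:btwDet} may legitimately be applied to $M^{(i)}$. The single-edge argument above settles this with a constant increase of at most $1$, and it is the only point where the combinatorial structure of the modified matrix, rather than a black-box determinant call, genuinely matters; everything else is either a standard reduction (clearing denominators, Cramer's rule) or a standard \Log\ primitive (iterated multiplication, integer division, bit extraction).
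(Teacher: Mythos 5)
Your proposal is correct and follows essentially the same route as the paper: express $A^{-1}_{ij}$ as a ratio of two determinants (a cofactor over $\det A$), check that the modified matrix still has bounded treewidth so that Theorem~\ref{thm:btwDet} applies to both, and finish with Logspace integer division for the bit extraction. The only difference is cosmetic --- the paper realizes the cofactor by swapping column $j$ into position $i$ and deleting a vertex (arguing a treewidth increase of $2$), whereas you replace column $i$ by $e_j$ via Cramer's rule (a subgraph plus one edge, hence treewidth increase of at most $1$), which is a slightly cleaner way to reach the same quantity; your explicit clearing of denominators is a reasonable addition the paper leaves implicit.
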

\begin{proof}
The inverse of a matrix $A$ is the matrix $B = \frac{\textbf{C}^T}{det(A)}$
where $\textbf{C} = (C_{ij})_{1 \leq i, j \leq n}$ is the cofactor matrix, whose $(i,j)$-th entry
$C_{ij} = (-1)^{i+j}\mbox{det}(A_{ij})$ is the determinant of the $(n-1)\times(n-1)$ matrix obtained
from $A$ by deleting the $i$-th row and $j$-th column. % 
If we can compute $C_{ij}$ in $\Log$, we can compute the entries of $B$ via 
integer division which is known to be in $\Log$ from \cite{HAB}.
To this end, consider the directed graph $G_A$ represented by $A$. To compute
$\mbox{det}(A_{ij})$, swap the columns of $A$ such that the $j$-th column becomes the $i$-th
column. The graph so obtained is of bounded treewidth (To see this, notice that the swapping
operation just re-routes all incoming edges of $j$ to $i$ and those of $i$ to $j$.
The tree decomposition of this graph is just obtained by adding vertices $(i,j)$ to every bag
in the tree decomposition of $G_A$ and also the edges rerouted to the respective bags.
This increases the treewidth by $2$).
Now, remove the $i$-th vertex in $G_A$ and all edges incident to it to get a graph $G_{A_{ij}'}$ 
on $(n-1)$ vertices. The swapping operation changes the determinant of $A_{ij}$ by a sign that is 
$(-1)^{i-j} = (-1)^{i+j}$. Computing the determinant of this modified matrix $A_{ij}'$ yields $C_{ij}$ 
as required. Since $A_{ij}'$ is obtained from $A$ by removing a vertex and all the edges incident on it, 
the treewidth of $A_{ij}'$ is at most the treewidth of $A$. By Theorem \ref{thm:btwDet},
$C_{ij}$ is in $\FL$. 
\end{proof}

%\subsection{Matrix Powering}
%The following is proved in the appendix.
\begin{corollary}\label{cor:pow}
There is a Logspace algorithm that on input an $(n \times n)$ bounded treewidth matrix
$A$, $1^m, 1^i, 1^j, 1^k$ gives the $k$-th bit of $(i,j)$-th entry of $A^m$.
\end{corollary}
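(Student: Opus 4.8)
The plan is to compute $A^m$ by reducing matrix powering to determinant evaluation, leveraging the inverse algorithm from Corollary~\ref{cor:inverse} and the standard trick of reading off powers from the inverse of a block matrix. The key observation is that powers of $A$ can be extracted from the resolvent $(I - yA)^{-1} = \sum_{t \ge 0} y^t A^t$, but since we work over $\mathbb{Q}$ we cannot manipulate the indeterminate $y$ directly in an integer determinant; instead I would use the combinatorial interpretation of $A^m_{ij}$ as the signed (weighted) sum over walks of length exactly $m$ from $i$ to $j$, and realize this count via a layered graph whose determinant or cofactor encodes it.

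First I would construct, from the bounded treewidth digraph $G_A$, a layered ``time-expanded'' graph $G^{(m)}$ with $m+1$ copies $V_0, V_1, \ldots, V_m$ of the vertex set, placing an edge of weight $a_{uv}$ from $u \in V_t$ to $v \in V_{t+1}$ for each edge $(u,v)$ of $G_A$ and each layer $t$. A walk of length $m$ from $i$ to $j$ in $G_A$ corresponds bijectively to a directed path from $i \in V_0$ to $j \in V_m$ in $G^{(m)}$, with matching weight product, so $(A^m)_{ij}$ is the total weight of such paths. The crucial step is to verify that $G^{(m)}$ has bounded treewidth: since each layer is a copy of $G_A$ and edges only connect consecutive layers, a tree decomposition of $G^{(m)}$ can be built by stacking the tree decompositions of the individual layers and linking consecutive bags, giving treewidth $O(k)$ where $k$ is the treewidth of $G_A$. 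I expect this treewidth bound to be the main obstacle to state cleanly, since the naive layered construction may connect a bag in layer $t$ to the corresponding bag in layer $t+1$ and one must argue this preserves the connectivity and coherence conditions of Definition~\ref{def:tdecomp} while only increasing width by a constant factor.

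Having established that $G^{(m)}$ is bounded treewidth, I would count the weighted $i$-to-$j$ paths of length $m$ as a determinant (or cofactor) of an associated matrix, exactly as in the cofactor computation of Corollary~\ref{cor:inverse}: augment $G^{(m)}$ with an edge from $j \in V_m$ back to $i \in V_0$ so that every such path closes into a cycle, and the determinant of the resulting bounded treewidth matrix (together with the forced empty-cycle-cover contributions from the remaining isolated vertices, handled by adding self-loops) isolates the weighted path count. Applying Theorem~\ref{thm:btwDet} to this polynomially sized, bounded treewidth matrix yields $(A^m)_{ij}$ in $\Log$. Finally, since the entries are rationals and the overall value is computed as a ratio of two determinants, the $k$-th bit of $(A^m)_{ij}$ is recovered by Logspace integer division~\cite{HAB}, just as in Corollary~\ref{cor:inverse}. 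The total space remains logarithmic because $m \le \poly(n)$ is given in unary, so $G^{(m)}$ has polynomially many vertices and its tree decomposition is Logspace-constructible via Theorem~\ref{prop:EJTB}.
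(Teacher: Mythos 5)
Your reduction hinges on the claim that the time-expanded graph $G^{(m)}$ has bounded treewidth, and that claim is false. Take $A$ to be the adjacency matrix of an undirected path on $n$ vertices (treewidth $1$); then $G^{(m)}$ has vertices $(v_i,t)$ and edges joining $(v_i,t)$ to $(v_{i\pm 1},t+1)$, so each parity class of $i+t$ induces (a rotated copy of) a grid of side $\Omega(\min(n,m))$, and the treewidth of $G^{(m)}$ is therefore $\Omega(\min(n,m))$, not $O(k)$. The proposed ``stacking'' of tree decompositions cannot repair this: a vertex $(v,t)$ lies in bags of both the $t$-th and the $(t+1)$-st copy of the decomposition, and the connectivity condition of Definition~\ref{def:tdecomp} would force $(v,t)$ into every bag on the tree path joining those two copies, which blows up the width. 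This is exactly the obstruction the paper itself flags in Section~\ref{sec:btwDet} for the Mahajan--Vinay reduction $g_{MV}$: layered path-counting constructions do not preserve bounded treewidth. (Your final step --- closing the $i$-to-$j$ path into a cycle, padding with self-loops, and reading the weighted path count off a signed cycle-cover sum --- would be sound if the treewidth were bounded, but it is not, so Theorem~\ref{thm:btwDet} cannot be applied to $G^{(m)}$.)

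The paper's actual proof uses precisely the resolvent identity that you mention and then discard in your first paragraph: rather than treating $y$ as an indeterminate, substitute a concrete small rational $t = 2^{-p}$ with $2^p$ larger than (an appropriate polynomial in) $\|A\|$, and set $A' = (I - tA)^{-1} = \sum_{j \ge 0} t^j A^j$. The matrix $I - tA$ has the same support as $A$ up to diagonal entries, hence the same treewidth, so its inverse is computable entrywise in \Log\ by Corollary~\ref{cor:inverse}; the choice of $t$ ensures the contributions of the distinct powers $A^j$ occupy disjoint windows of bit positions of $A'_{ij}$, so $(A^m)_{ij}$ is read off directly. If you want to salvage your write-up, replace the layered graph by this numerical substitution; as it stands, the argument has a genuine gap at its central claim.
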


\begin{proof}%(of Theorem~\ref{thm:pow})
Consider $A' = (I - tA)^{-1}$ where $I$ is the $(n \times n)$ identity matrix and $t$
is a small constant to be chosen later. % \begin{eqnarray*}
% A' &=& (I - tA)^{-1}\\
%  &=& I + tA + t^2A^2 + \ldots \\
%  &=& \sum_{j\geq 0} t^j A^j \\
% \end{eqnarray*}
Notice that $A' = (I - tA)^{-1} = \sum_{j\geq 0} t^j A^j$.
By choosing $t$ as a suitably small power of $2$ (say $2^{-p} = t$ such that $2^p > \|A\|$) 
and computing $A'$ to a suitable accuracy, we can read the $(i,j)$-th entry of $A^m$ off the
appropriate bit positions of the $(i,j)$-th entry of $A'$. So, in essence the problem of powering
bounded treewidth matrix $A$ reduces to the problem of computing the inverse of a related matrix
which is known to be in $\Log\ $ via Corollary~\ref{cor:inverse}.
\end{proof}

\subsection{Spanning Trees and Directed Euler Tours}

%Here we sketch briefly how Corollary~\ref{cor:btwDirEuler}.
%follows from Theorem~\ref{thm:btwDet}.
%
\begin{fact}\label{fact:arb}
The number of arborescences of a digraph equals any
cofactor of its Laplacian.
\end{fact}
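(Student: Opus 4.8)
The statement is the directed Matrix--Tree (Tutte) theorem, and the plan is first to prove, for a fixed vertex $r$, that the principal cofactor $\det(L_{\hat r})$ obtained by deleting row $r$ and column $r$ of the Laplacian $L = D - A$ equals the number of arborescences rooted at $r$, and then to argue that in the relevant setting all cofactors coincide. With the convention $L_{ij} = -a_{ij}$ for $i \neq j$ and $L_{ii} = \sum_{k}a_{ik}$, I would expand $\det(L_{\hat r}) = \sum_{\sigma}\operatorname{sgn}(\sigma)\prod_{i \neq r}(L_{\hat r})_{i,\sigma(i)}$ over permutations $\sigma$ of $V\setminus\{r\}$, and then split every diagonal factor $L_{ii} = \sum_{k}a_{ik}$ into a choice of one out-edge of $i$. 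Each resulting term is indexed by a \emph{functional graph}: an assignment to every non-root vertex of a single outgoing edge of $G$, together with the record of which vertices were fixed points of $\sigma$ (``diagonal'' edges) and which lay on nontrivial cycles of $\sigma$. Tracking signs as in Fact~\ref{fact:ccsign}, an edge produced by a diagonal choice contributes $+a$, while a directed cycle $C$ produced by a nontrivial cycle of $\sigma$ contributes $-\prod_{e\in C}a_{e}$; an acyclic functional graph, i.e.\ an arborescence directed into $r$, therefore contributes exactly $+\prod_{e}a_{e} = +w(T)$.

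The core of the argument is a sign-reversing involution on the terms that are not arborescences. Every such functional graph contains at least one directed cycle among the non-root vertices, and each directed cycle present can be realized in exactly two ways, via $\sigma$-cycles or via diagonal edges, contributing weights of opposite sign. Fixing a total order on $V$, I would single out the directed cycle through the smallest vertex it contains and toggle its realization; this is a fixed-point-free involution that pairs each cyclic configuration with one of equal weight and opposite sign, so all cyclic contributions cancel and only the arborescences rooted at $r$ survive, each with coefficient $+1$. This yields $\det(L_{\hat r}) = \#\{\text{arborescences rooted at } r\}$. I expect the sign bookkeeping, i.e.\ verifying that the two realizations of a cycle carry opposite signs and that the ``smallest-vertex'' choice makes the pairing well defined and involutive, to be the main obstacle; pleasingly, the device of selecting the smallest vertex of a cycle is exactly the one already used in Lemma~\ref{lem:thetaX} and Section~\ref{sec:btwDet}.

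Finally, to justify reading the statement as ``any cofactor,'' I would use the kernel structure of $L$. Since the off-diagonal entries are $-a_{ij}$ and the diagonal is the out-degree, $L$ has zero row sums, so $L\mathbf{1} = 0$; hence every column of the adjugate $\operatorname{adj}(L)$ lies in $\ker L$, and when $\operatorname{rank}L = n-1$ each column is a constant vector, so a cofactor depends only on its column, i.e.\ on the chosen root. In the balanced (Eulerian) setting relevant to the BEST theorem the Laplacian additionally has zero column sums, so $\mathbf{1}^{T}L = 0$ and the same reasoning applied to rows forces $\operatorname{adj}(L) = c\,J$ with $J$ the all-ones matrix; then genuinely every cofactor equals the single value $c$, which by the previous paragraphs is the common arborescence count. (For a non-balanced digraph the diagonal cofactors may differ, each counting the arborescences rooted at its own vertex, so the word ``any'' is to be understood in the balanced case that the application to Euler tours supplies.) An alternative proof by deletion--contraction induction on edges is available, but I would prefer the involutive argument because it dovetails with the cycle-cover viewpoint used throughout the determinant algorithm.
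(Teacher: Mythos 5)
The paper does not actually prove this statement: it is the classical directed Matrix--Tree (Tutte) theorem, quoted as a known Fact and used as a black box (via \cite{RPSalgebraic}) together with the BEST theorem in Corollary~\ref{cor:btwDirEuler}. Your proposal therefore supplies a proof where the paper supplies only a citation, and the argument you sketch --- expanding the principal cofactor over permutations, splitting each diagonal entry $L_{ii}=\sum_k a_{ik}$ into a choice of out-edge so that terms are indexed by functional graphs, checking that a $\sigma$-cycle of length $\ell$ carries sign $(-1)^{\ell-1}\cdot(-1)^{\ell}=-1$ while the same directed cycle realized by diagonal choices carries sign $+1$, and cancelling via the involution that toggles the realization of the cycle through its smallest vertex --- is the standard combinatorial proof and is correct. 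Your last paragraph is a genuinely useful addition rather than a digression: as literally stated, ``any cofactor'' is false for general digraphs (take the single edge $u\to v$ on two vertices; the two principal cofactors are $1$ and $0$, matching the two different rooted-arborescence counts). The row sums of $L=D-A$ always vanish, which makes the adjugate constant along one index only; equality of \emph{all} cofactors additionally requires $\mathbf{1}^{T}L=0$, i.e.\ a balanced digraph --- precisely the Eulerian setting in which the paper invokes the Fact. So your proof is sound, with the honest caveat that it establishes the corrected reading of the statement (principal cofactor at $r$ counts arborescences rooted at $r$; all cofactors coincide in the balanced case) rather than the Fact verbatim.
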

where the Laplacian of a directed graph $G$ is $D - A$ where $D$
is the diagonal matrix with the $D_{ii}$ being the out-degree of vertex $i$
and $A$ is the adjacency matrix of the underlying undirected graph.
%A similar result holds for spanning trees. 
% \begin{corollary}[of Theorem~\ref{thm:btwDet}] Counting spanning trees
% in undirected graphs and arborescences in directed graphs both of
% bounded tree-width is in \Log.
% \end{corollary}
The BEST Theorem states:
\begin{fact}[\cite{BE87}\cite{ST41}]\label{fact:best}
The number of Euler Tours in a directed Eulerian graph $K$ is exactly:
\[
t(K)\prod_{v \in V}{(\mbox{deg}(v) - 1)!}
\]
where $t(K)$ is the number of arborescences in $K$ rooted at an arbitrary
vertex of $K$ and $\mbox{deg}(v)$ is the indegree as well as the outdegree
of the vertex $v$.
\end{fact}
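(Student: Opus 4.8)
The plan is to prove the BEST theorem by a \emph{last-exit} bijection that trades Euler tours for pairs consisting of an in-arborescence together with an independent ordering of the out-edges at each vertex. I would fix a root $w$ and, to pin down the counting convention, a distinguished first out-edge $e^\ast$ at $w$, and count Euler tours that begin by traversing $e^\ast$. First I would set up the forward map: given such a tour, for every vertex $v \neq w$ mark the out-edge through which the tour leaves $v$ for the \emph{last} time. This assigns exactly one marked edge to each non-root vertex, so there are $n-1$ marked edges; the key observation is that they contain no directed cycle, since among the marked edges of any putative cycle the one occurring latest in the tour has a head $u \neq w$ that must still be exited afterwards through $u$'s own (later) last-exit edge, contradicting maximality. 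Hence the marked edges form a spanning in-arborescence $A$ oriented toward $w$, which is where the factor $t(K)$ comes from.

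Next I would build the reverse map and check it inverts the first. Given an in-arborescence $A$ rooted at $w$ and, for each vertex $v$, a linear order of its out-edges in which the edge of $A$ leaving $v$ comes last (for $v \neq w$) and $e^\ast$ comes first (at $w$), I run the rotor walk from $w$: on each arrival at a vertex, leave along its next unused out-edge. Because in-degree equals out-degree at every vertex, a maximal trail started at $w$ must end at $w$. The main obstacle, and the crux of the argument, is the non-stuck lemma: when the walk halts at $w$, every edge has in fact been used. I would prove it by contradiction---if some out-edge were unused, then, since each vertex's tree edge is ordered last, the owner's tree edge would be unused as well; walking up the arborescence toward $w$ propagates an unused tree edge all the way to an in-edge of $w$, which by degree balance contradicts the fact that all out-edges of $w$ were consumed. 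This shows the rotor walk is a genuine Euler tour whose last-exit tree is exactly $A$, so the two maps are mutually inverse.

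It remains to count and to handle the root. Once $A$ is fixed, the only freedom in the reverse construction is the ordering of the non-tree out-edges at each vertex: $(\deg(v)-1)!$ choices at every $v \neq w$, and $(\deg(w)-1)!$ choices at $w$ once $e^\ast$ is fixed. Multiplying and summing over all in-arborescences rooted at $w$ gives $t(K)\prod_{v}(\deg(v)-1)!$ tours starting with $e^\ast$; since a cyclic Euler tour uses $e^\ast$ exactly once, it corresponds to a unique tour started at $e^\ast$, so the fixed-$e^\ast$ count equals the cyclic count and the formula follows. Finally, to justify that the arborescence count may be taken \emph{at an arbitrary vertex}, I would invoke Fact~\ref{fact:arb}: for an Eulerian (balanced) digraph the Laplacian $D-A$ has zero row sums and zero column sums, so all its cofactors coincide and $t(K)$ is independent of the root. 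The two delicate points on which I expect to spend the most care are the non-stuck lemma and matching the root's contribution so that the product is uniformly $(\deg(v)-1)!$.
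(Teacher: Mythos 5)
This statement is imported by the paper as a citation (Fact~\ref{fact:best}, attributed to \cite{BE87} and \cite{ST41}); the paper contains no proof of it, so there is nothing internal to compare against. Your argument is a correct, self-contained proof and is essentially the classical van Aardenne-Ehrenfest--de Bruijn/Smith--Tutte argument that those references contain: the last-exit edges of a tour form a spanning in-arborescence toward the root (your maximality argument for acyclicity is right, noting that any marked cycle avoids $w$ since $w$ has no marked out-edge), the rotor-walk reconstruction with the tree edge ordered last inverts it, and the non-stuck lemma is handled correctly by propagating an unused tree edge up to an unused in-edge of $w$ and contradicting degree balance there. The bookkeeping is also right: fixing $e^\ast$ first at $w$ makes the root contribute $(\deg(w)-1)!$ like every other vertex, and the fixed-first-edge count equals the cyclic count because $e^\ast$ occurs exactly once in any tour. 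Your appeal to Fact~\ref{fact:arb} for root-independence of $t(K)$ (all cofactors of a matrix with zero row and column sums coincide, and balance gives both) is exactly how the paper itself implicitly justifies the phrase ``rooted at an arbitrary vertex.'' The only point worth making explicit is connectivity of $K$, which is needed both for the tour to visit every vertex and for a spanning in-arborescence to exist, but this is part of the definition of an Eulerian digraph.
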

We combine Facts \ref{fact:arb} and \ref{fact:best}with Theorem~\ref{thm:btwDet}
to compute the number of directed Euler Tours in a directed Eulerian graph in
$\Log$.

%\subsection{Counting Problems}\label{sec:appG}
%We now document some consequences of Theorem \ref{thm:btwDet} to some 
%counting problems on bounded treewidth graphs. 
%We show that counting the 
%number of spanning trees and Euler tours of directed graphs 
%is in $\Log$.
%
%\subsection{Counting Spannning Trees}
%We start with a simple application of Theorem~\ref{thm:btwDet}.
%Counting the number of spanning trees of an graph(both directed and undirected) was long
%known to be in polynomial time via a determinant computation by Kirchoff's\footnote{See Section 1.2 of 
%\url{http://math.mit.edu/~levine/18.312/alg-comb-lecture-19.pdf}
%for a nice proof of the undirected case via the directed case} matrix tree theorem\cite{RPSalgebraic}:
Use the Kirchoff Matrix Tree theorem\cite{RPSalgebraic} and Fact~\ref{fact:best}:
\begin{corollary}\label{cor:btwDirEuler}
Counting arborescences and directed Euler Tours in a directed Eulerian graph $G$ 
(where the underlying undirected graph is bounded treewidth) is in \Log.
\end{corollary}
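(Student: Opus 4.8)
The plan is to reduce both counting problems to computing determinants of bounded-treewidth integer matrices, which Theorem~\ref{thm:btwDet} places in \Log, and then to dispatch the small amount of extra combinatorial arithmetic separately.

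First I would handle arborescences. By Fact~\ref{fact:arb}, the number of arborescences of $G$ equals a cofactor of its Laplacian $L = D - A$, where $D$ is the diagonal out-degree matrix. I would take the principal cofactor obtained by deleting row and column $i$ for a fixed vertex $i$; since the deleted indices coincide the cofactor sign is $+1$, so the cofactor is simply the determinant of the minor $L_{\hat i \hat i}$. (For an Eulerian graph this count is independent of the root $i$, which is exactly the quantity $t(G)$ needed below.) The key observation is that passing from $A$ to $L$ only alters diagonal entries, i.e. adds weighted self-loops, which leaves the underlying undirected Gaifman graph unchanged; hence $L$ still has bounded treewidth. Deleting row and column $i$ corresponds to removing the vertex $i$, which can only decrease treewidth. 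Thus $L_{\hat i \hat i}$ is an integer matrix whose support is a bounded-treewidth digraph, and Theorem~\ref{thm:btwDet} computes its determinant, and therefore $t(G)$, in \Log.

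Next I would handle Euler tours via the BEST theorem (Fact~\ref{fact:best}): the number of Euler tours equals $t(G)\prod_{v\in V}(\deg(v)-1)!$. The factor $t(G)$ is already available from the previous step, and the degrees $\deg(v)$ can be read off the input in \Log. The remaining task is to compute $\prod_{v\in V}(\deg(v)-1)!$, which I would expand as a single iterated product $\prod_{v\in V}\prod_{j=1}^{\deg(v)-1} j$ of polynomially many integers, each bounded by the number of edges and hence of $O(\log n)$ bits. Iterated integer multiplication of polynomially many polynomially-bounded integers lies in \Log (indeed in uniform \TCz, by \cite{HAB}), so this product, and the final multiplication by $t(G)$, remain in \Log.

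The conceptually substantive work is entirely absorbed into Theorem~\ref{thm:btwDet}; the two points that still need care are (i) checking that the Laplacian construction and the single vertex deletion preserve bounded treewidth, which I expect to be routine since both act only on the diagonal or remove a vertex, and (ii) confirming the \Log-computability of the factorial product, which follows from standard bounds on iterated multiplication. I therefore anticipate no real obstacle beyond bookkeeping, the heavy lifting being precisely the determinant algorithm already in hand.
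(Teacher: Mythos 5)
Your proposal is correct and follows essentially the same route as the paper: combine Fact~\ref{fact:arb} (matrix tree theorem) with Theorem~\ref{thm:btwDet} to get $t(G)$ in \Log, then apply the BEST theorem (Fact~\ref{fact:best}) and iterated multiplication for the factorial factor. The paper leaves the verification that the Laplacian and the vertex deletion preserve bounded treewidth, and the \Log-computability of the product $\prod_v(\deg(v)-1)!$, implicit; your write-up supplies exactly those routine details.
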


\section{Hardness Results}\label{sec:hardness}
%\subsection{Proof of Hardness of Bounded Tree-width Determinant}
We show a couple of hardness results to complement our Logspace
upper bounds. 
%\subsection{Hardness of Bounded Treewidth Determinant}
\begin{proposition}[Hardness of Bounded Treewidth Determinant]\label{lem:hard}
For all constant $k \ge 1$, computing the determinant of an $(n \times n)$ matrix $A$
whose underlying undirected graph has treewidth at most $k$ is $\Log$-hard.
\end{proposition}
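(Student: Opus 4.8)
The plan is to reduce from the $\Log$-complete problem of reachability in a directed path: given the successor relation of a (scrambled) simple directed path $P$ on vertex set $[n]$, together with two designated vertices $s,t$, decide whether $t$ is reachable from $s$. This problem (essentially $\mathsf{ORD}$, deciding the order of two elements along a path given by successors) is $\Log$-complete under first-order, hence $\ACz$, reductions, so it suffices to exhibit an $\ACz$-computable map from an instance $(P,s,t)$ to a bounded-treewidth matrix $A$ whose determinant reveals the answer. The guiding idea is to build a weighted digraph $G$ whose only cycle covers are (i) the all-self-loops cover and (ii) a single cover that closes the unique $s$-to-$t$ path of $P$ into a cycle; the latter exists exactly when $t$ is reachable from $s$, so the signed sum over cycle covers — the determinant — will detect reachability.

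Concretely, I would form $A$, the weighted adjacency matrix of $G$ on $[n]$, as follows: put a self-loop of weight $1$ on every vertex, give every edge of $P$ weight $-1$, and add one extra edge $(t,s)$ of weight $-1$. This map is a projection, so it is trivially in $\ACz$. Next I would classify the cycle covers of $G$. Since $P$ is acyclic, every directed cycle of $G$ other than a self-loop must use the edge $(t,s)$ and then return from $s$ to $t$ along $P$; such a return path exists and is unique precisely when $t$ is reachable from $s$. Hence there are at most two cycle covers: the identity cover (all self-loops), and — when $t$ is reachable — the cover consisting of the single cycle on the $\ell$ vertices of the $s$-to-$t$ path closed by $(t,s)$, together with self-loops on the remaining $n-\ell$ vertices.

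Then I would evaluate $\det(A)$ using Fact~\ref{fact:ccsign}. The identity cover has $k=n$ cycles, so its sign is $(-1)^{n+n}=1$ and its weight is $1$, contributing $+1$. The nontrivial cover has $k=1+(n-\ell)$ cycles and weight $(-1)^{\ell}$, since it traverses exactly $\ell$ weight-$(-1)$ edges, so its contribution is
\[
(-1)^{\,n+k}(-1)^{\ell}\;=\;(-1)^{\,2n+1-\ell}(-1)^{\ell}\;=\;-1 .
\]
Therefore $\det(A)=1-1=0$ when $t$ is reachable from $s$, and $\det(A)=1$ otherwise, so from the determinant value one decides the reachability instance and computing the determinant is $\Log$-hard. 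Finally I would bound the treewidth: self-loops leave the underlying undirected graph unchanged, and that graph is the path $P$ together with the single chord $\{s,t\}$, which has treewidth at most $2$. This already establishes $\Log$-hardness for treewidth $\le 2$, hence for every $k\ge 2$; for $k=1$ one can instead read reachability off a single cofactor of $I-B$, where $B$ is the adjacency matrix of $P$, whose support is a forest.

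The hard part will be the sign management. The sign of the nontrivial cover is $(-1)^{n+k}=(-1)^{\ell+1}$, which depends on the parity of the (unknown) path length $\ell$; a naive $\{0,1\}$-weighting would make the determinant oscillate between $0$ and $2$ according to this parity and thus fail to be a clean indicator. The fix is exactly the uniform $-1$ weighting on traversed edges, whose contribution $(-1)^{\ell}$ cancels the parity dependence in the sign and leaves the fixed value $-1$. A secondary point to check carefully is that no spurious cycle covers appear even when $s$ has a predecessor or $t$ a successor in $P$; this follows because acyclicity of $P$ forces every nontrivial cycle to use $(t,s)$ and therefore to coincide with the unique $s$-to-$t$ routing.
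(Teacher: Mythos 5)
Your reduction is correct, and it starts from the same source problem the paper uses ($\mathsf{ORD}$, order between two vertices on a directed path, $\Log$-complete by Etessami) and the same cycle-cover view of the determinant, but the gadget is genuinely different. The paper keeps the matrix $\{0,1\}$-valued: it rewires the path with a handful of unweighted edges so that when $s$ precedes $t$ there are exactly two cycle covers whose cycle counts differ by one (signs $(-1)^{n+3}$ and $(-1)^{n+2}$, summing to $0$), and when $t$ precedes $s$ there is exactly one cover, so the determinant is $\pm 1$. You instead add a single back edge $(t,s)$ and use weights $-1$ on the traversed edges to cancel the parity of the unknown path length $\ell$; your sign computation $(-1)^{n+k}(-1)^{\ell}=-1$ checks out, and the observation that acyclicity of $P$ forces every nontrivial cycle through $(t,s)$ (hence at most two covers) is the right structural point. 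The trade-off: your construction is simpler and its sign analysis is cleaner, but it produces a matrix with entries in $\{-1,0,1\}$, whereas the paper's establishes hardness already for $\{0,1\}$ matrices, which is a slightly stronger statement (and matters if one wants hardness for the permanent-style combinatorial setting). Two small caveats. First, your main construction yields treewidth $2$ (path plus chord), so like the paper's it really proves hardness for all $k\ge 2$; your proposed fallback for $k=1$ via a cofactor of $I-B$ is under-justified, since the $(n-1)\times(n-1)$ minor obtained by deleting row $t$ and column $s$ of a path-supported matrix need not be forest-supported under any natural relabeling --- but the paper does not establish $k=1$ either, so this is not a gap relative to it. Second, you should fix the degenerate cases $s=t$ and $(t,s)\in E(P)$ explicitly, though both work out (the only cycle cover is the identity and the determinant is $1$, consistent with non-reachability).
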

\begin{proof}
%[of Theorem\ref{thm:hard}]
 We reduce the problem $\mathsf{ORD}$ of deciding for a directed path $P$
 and two vertices $s, t \in V(P)$ if there is a path from $s$ to $t$ (known to be
 $\Log$-complete via \cite{Ete97}) to computing the determinant of bounded treewidth
 matrices (Note that $P$ is a path and hence it has treewidth $1$). 
%  Since $\mathsf{ORD}$
%  is $\Log$-hard, we get that the determinant of bounded treewidth matrices is $\Log$-hard.
%  
 Our reduction is as follows: Given  a directed path $P$ with source $a$, sink $b$ 
 and distinguished vertices $s$ and $t$, we construct a new graph $P'$ as follows: 
 Add edges $(a, s'), (s', t), (t, s), (s, a)$ and  $(b, t')$ and remove edges 
 $(s', s), (t, t')$ where $s'$ and $t'$ are vertices in $P$ such that 
 $(s', s), (t, t') \in E(P)$ (See Figure~\ref{fig:st}).
 
 We claim that there is a directed path between $s$ and $t$ if and only if  
 the determinant of the adjacency matrix of $P'$ is zero.  
 If there is a directed path from $s$ to $t$ in $P$, then there are 
 two cycle covers in $P'$ : $(a,s') (s,t) (t',b)$, with three cycles and $(a, s', t, s)
 , (t', b)$, with two cycles. Using Fact \ref{fact:ccsign}, the signed sum of these cycle
 covers is $(-1)^{n+3} + (-1)^{n+2} = 0$, which is the determinant of $P'$.
 
 In the case that $P$ has a directed path from $t$ to $s$ (see Figure~\ref{fig:ts}), 
 then there is one cycle namely $(a, t, s', b, t', s)$. We argue as follows:
 The edges $(t,s), (s,b), (b,t'), (t',s')$ are in the
 cycle cover since they are the only incoming edges to $s, b, t', s'$ respectively. So $(t, s, b, t', s')$
 is a part of any cycle cover of the graph. This forces one to pick the edge $(s', a)$ and hence we have
 one cycle in the cycle cover for $P'$.
\end{proof}

% \documentclass{article}
% \usepackage{tikz}
% \usepackage{caption}
% 
% \usetikzlibrary{shadows}
% 
% 
% \begin{document}
%
%\begin{center}
\begin{figure}[ht]
\centering
%\begin{tabular}{cc}
\resizebox{0.6\textwidth}{!}{
\begin{tikzpicture}[every node/.style={circle, draw, scale=1.0, fill=gray!50}, scale=1.0, rotate = 180, xscale = -1]

\node (1) at (0, 0) {$a$};
\node (2) at (3.0, 0) {$s$};
\node (3) at (4.5, 0) {$s'$};
\node (4) at (7.5, 0) {$t'$};
\node (5) at (9.0, 0) {$t$};
\node (6) at (12.0, 0) {$b$};

\draw[->] (1) -- (2);
\draw[->] (2) -- (3);
\draw[->] (3) -- (4);
\draw[->] (4) -- (5);
%\draw (8) -- (7);
\draw[->] (5) -- (6);
\end{tikzpicture}
%\caption{$s$ occurs before $t$}
} %\\

%&
\resizebox{0.6\textwidth}{!}{
\begin{tikzpicture}[every node/.style={circle, draw, scale=1.0, fill=gray!50}, scale=1.0, rotate = 180, xscale = -1]

\node (1) at (0, 0) {$a$};
\node (2) at (3.0, 0) {$s$};
\node (3) at (4.5, 0) {$s'$};
\node (4) at (7.5, 0) {$t'$};
\node (5) at (9.0, 0) {$t$};
\node (6) at (12.0, 0) {$b$};

\draw[->] (1) -- (2);
\draw[thick, dashed, ->] (2) to[out=-100,in=-150] (1);
%\draw (3) -- (2);
\draw[->] (3) -- (4);
\draw[thick, dashed, ->] (4) to[out=-100,in=-150] (3);
%\draw (5) -- (4);
\draw[thick, dashed, ->] (2) to[out=-250,in=-350] (4);
\draw[thick, dashed, ->] (3) to[out=100,in=150] (1);	
%\draw (8) -- (7);
\draw[->] (5) -- (6);
\draw[thick, dashed, ->] (6) to[out=-100,in=-150] (5);
\end{tikzpicture}
}
%A Figure&B Figure&C Figure&D Figure  
%\end{tabular}
\caption{\label{fig:st} $s$ occurs before $t$}
\end{figure}
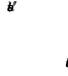
%\end{center}
% 
% \end{document}

%\begin{center}
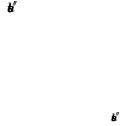
\begin{figure}[ht]
\centering
\resizebox{0.6\textwidth}{!}{
\begin{tikzpicture}[every node/.style={circle, draw, scale=1.0, fill=gray!50}, scale=1.0, rotate = 180, xscale = -1]

\node (1) at (0, 0) {$a$};
\node (2) at (3.0, 0) {$t$};
\node (3) at (4.5, 0) {$t'$};
\node (4) at (7.5, 0) {$s'$};
\node (5) at (9.0, 0) {$s$};
\node (6) at (12.0, 0) {$b$};

\draw[->] (1) -- (2);
\draw[->] (2) -- (3);
\draw[->] (3) -- (4);
\draw[->] (4) -- (5);
\draw[->] (5) -- (6);

\end{tikzpicture}
} %\\

\resizebox{0.6\textwidth}{!}{
\begin{tikzpicture}[every node/.style={circle, draw, scale=1.0, fill=gray!50}, scale=1.0, rotate = 180, xscale = -1]

\node (1) at (0, 0) {$a$};
\node (2) at (3.0, 0) {$t$};
\node (3) at (4.5, 0) {$t'$};
\node (4) at (7.5, 0) {$s'$};
\node (5) at (9.0, 0) {$s$};
\node (6) at (12.0, 0) {$b$};

\draw[->] (1) -- (2);
\draw[thick, dashed, ->] (4) to[out=-100,in=-150] (1);
%\draw (3) -- (2);
\draw[->] (3) -- (4);
\draw[thick, dashed, ->] (5) to[out=100,in=150] (1);
%\draw (5) -- (4);
\draw[thick, dashed, ->] (2) to[out=100,in=150] (5);
\draw[thick, dashed, ->] (4) to[out=-100,in=-160] (2);	
%\draw (8) -- (7);
\draw[->] (5) -- (6);
\draw[thick, dashed, ->] (6) to[out=100,in=150] (3);
%\label{fig:ts}
\end{tikzpicture}
}
\caption{\label{fig:ts}$t$ occurs before $s$}
\end{figure}
\begin{proposition}[Hardness of Bounded Treewidth Matrix Powering]\label{prop:powhard}
 Matrix Powering is $\Log$-hard under disjunctive truth table reductions.
%disjunctive truth table reduction.
\end{proposition}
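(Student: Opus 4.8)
The plan is to reduce from the same $\Log$-complete problem $\mathsf{ORD}$ used in Proposition~\ref{lem:hard}, exploiting the fact that the $m$-th power of the adjacency matrix of a directed path records reachability at distance exactly $m$. Given an $\mathsf{ORD}$ instance consisting of a directed path $P$ on $n$ vertices together with distinguished vertices $s$ and $t$, let $A$ be the $\{0,1\}$ adjacency matrix of $P$. Since $P$ is a path, $A$ has treewidth $1$, and because each vertex has out-degree at most one there is at most one walk of any given length leaving $s$; consequently $(A^m)_{st}\in\{0,1\}$, and it equals $1$ precisely when $t$ lies exactly $m$ steps after $s$ along $P$.

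First I would observe that $s$ reaches $t$ in $P$ if and only if the distance from $s$ to $t$ is some value $m\in\{1,\ldots,n-1\}$, which in turn holds if and only if $(A^m)_{st}\neq 0$ for that particular $m$. This immediately suggests a disjunctive truth-table reduction: for each $m\in\{1,\ldots,n-1\}$ I would issue the oracle query asking for the least significant bit of the $(s,t)$-entry of $A^m$ (the decision version of the powering problem of Corollary~\ref{cor:pow}), and then output the disjunction of the $n-1$ answers. Writing down $A$ and the list of queries is clearly possible in \Log, the queries are non-adaptive, and crucially every query is posed about the \emph{same} treewidth-$1$ matrix $A$ (only the exponent $m$ varies), so all queried instances have bounded treewidth as required.

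Correctness follows because the $n-1$ queries exhaust every possible distance from $s$: the path instance is a positive $\mathsf{ORD}$ instance exactly when at least one queried bit equals $1$, which is precisely the disjunction of the oracle answers. I expect the only delicate point to be the bookkeeping that makes this a bona fide \emph{disjunctive} truth-table reduction — namely verifying that $(A^m)_{st}$ is genuinely $0/1$-valued so that its low-order bit is a faithful indicator of reachability (guaranteed by the out-degree-one structure of $P$), and that the correct Boolean combination of the non-adaptive queries is indeed the disjunction rather than some other connective. Granting this, hardness of powering for bounded treewidth matrices follows, showing that the \Log\ upper bound of Corollary~\ref{cor:pow} cannot be improved.
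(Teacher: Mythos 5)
Your reduction is correct and is essentially the paper's: both reduce from the $\Log$-complete problem $\mathsf{ORD}$ via the treewidth-$1$ adjacency matrix of the path and decide reachability by a disjunction over non-adaptively queried bits of powered-matrix entries. The only difference is cosmetic --- the paper powers $(I+A_P)$ once to exponent $n$ and disjoins over all bits of the single $(s,t)$-entry (which equals $\binom{n}{d}$ when the distance $d$ exists), whereas you query the least significant bit of $(A_P^m)_{st}$ for each $m\in\{1,\ldots,n-1\}$; both are valid disjunctive truth-table reductions computable in \Log.
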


\begin{proof}(of Proposition~\ref{prop:powhard})
We reduce $\mathsf{ORD}$ to matrix powering. Given an directed path $P$ on $n$ vertices 
and distinguished vertices $s$ and $t$, we argue as follows: There is a directed path between $s$
and $t$, then it must be of length $i$ for an unique $i \in [n]$. Consider the matrix $(I + A_P)^n$:
$s$ and $t$ are connected by a path if and only if $(I + A_P)^n_{s,t} \neq 0$. This is because
$(I + A_P)^n_{s,t}$ gives the walks from $s$ to $t$, and if at all there is a path from $s$ to
$t$, then there is definitely a walk of length at most $n$ between them. Checking if this entry is
zero can be done by a DNF which takes as input the bits of $(I + A_P)^n_{s,t}$. 
% So consider the sequence of matrices
% $A_P^i$ for $i = 1, \ldots, n$. $\sum_{i=1}^n A_P^i[s,t] > 0$ if and only if there is a path between
% $s$ and $t$.
\end{proof}

%\subsection{Hardness of Bounded Treewidth IMM}

%\subsection{Iterated Matrix Multiplication}

% Given a sequence of $(n \times n)$ matrices $M_1, M_2, \ldots, M_t$ with rational entries, 
% compute the $k$-th bit of the $(i,j)$-th entry in $\prod_{l = 1}^t M_i$. When $M_i$'s are
% arbitrary $(n \times n)$ matrices, this problem is $\Gl$-hard. We show that even when the
% $M_i$'s are bounded treewidth matrices, the problem stays $\Gl$-hard under logspace many-one
% reductions. It is worthwhile to contrast this with the case of powering and the determinant, 
% which are both $\Gl$-hard for arbitrary matrices and are in $\Log$ for bounded treewidth matrices.

 \begin{proposition}[Hardness of Bounded Treewidth IMM]\label{prop:imm}
  Given a sequence of bounded treewidth matrices with rational entries $M_1, M_2, \ldots, M_n$
  and $1^i, 1^j, 1^k$ as input, computing the $k$-th bit of $(i,j)$-th entry of $\prod_{l=1}^n
  M_l$ is $\Gl$-hard.
 \end{proposition}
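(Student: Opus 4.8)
The plan is to reduce from Iterated Matrix Multiplication of \emph{general} integer matrices, which (as the introduction already notes for general graphs) is $\Gl$-complete, to IMM of bounded-treewidth matrices. Given a general instance $M_1,\dots,M_n$ (each $n\times n$ with integer entries), I will output a longer sequence $N_1,\dots,N_T$ with $T=\poly(n)$, each $N_t$ of treewidth at most $1$, so that a single designated entry of $\prod_t N_t$ equals the designated entry of $\prod_{l=1}^n M_l$; the $k$-th bit then carries over verbatim. The whole point is that although each factor is sparse, the product need not be bounded treewidth, which is exactly the behaviour that lets IMM escape the Logspace upper bound of Corollary~\ref{cor:pow}.

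The heart of the argument is an oblivious decomposition of a dense matrix into treewidth-$1$ factors, arranged so that \emph{composition} (not summation) is realized. I work in dimension $N=(n+1)n$, viewing coordinates as $n+1$ consecutive registers $R_0,\dots,R_n$, each holding $n$ coordinates, with $R_0$ the input block and $R_n$ the output block. For a layer index $l$ and a pair $(i,j)$, set $E^{l}_{ij}=I_N + (M_l)_{ij}\,e_{p}e_{q}^{\top}$, where $p$ is coordinate $i$ of $R_l$ and $q$ is coordinate $j$ of $R_{l-1}$. The support of $E^l_{ij}$ is the identity diagonal together with one off-diagonal edge, so its underlying undirected graph is a single edge plus isolated (self-looped) vertices, of treewidth $1$. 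For fixed $l$ the matrices $\{E^l_{ij}\}_{i,j}$ pairwise commute, since each reads only from $R_{l-1}$ and writes only into $R_l$ so all cross terms vanish, and their product is $I_N+B_l$, where $B_l$ is the block sending $R_{l-1}$ to $R_l$ via $M_l$.

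Concatenating these blocks in layer order gives the factor sequence. Because $B_l$ is the only block that advances the register index from $l-1$ to $l$ and $B_l^2=0$, in $\prod_l (I_N+B_l)$ the sole contribution to the $(R_n,R_0)$ block is the single chain $B_n B_{n-1}\cdots B_1$, yielding $(R_n,R_0)$-block $=M_n M_{n-1}\cdots M_1$; reversing the layer labelling produces $\prod_{l=1}^n M_l$ instead. Reading the entry of $\prod_t N_t$ indexed by (coordinate $i$ of $R_n$, coordinate $j$ of $R_0$) therefore returns $(\prod_l M_l)_{ij}$, and I pass $1^i,1^j,1^k$ through with the appropriate coordinate offsets. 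The sequence has $T=O(n^3)$ matrices of dimension $O(n^2)$, and each factor is produced by copying one entry of some $M_l$ into a fixed position, so the reduction is computable in \Log\ (indeed in \ACz).

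The main obstacle, and the reason for the register construction, is that a naive in-place realization of $v\mapsto M_l v$ by single-entry additions interferes with itself (a coordinate is read after it has already been updated), and stacking the elementary factors of one dense matrix directly yields a \emph{sum} of the $M_l$ rather than their product. Introducing a fresh register per layer removes both difficulties: within a layer the factors genuinely commute, while across layers the register grading forces exactly the desired ordered product and keeps every individual factor at treewidth $1$. The only remaining checks, namely $B_l^2=0$ and the absence of any spurious chain reaching the $(R_n,R_0)$ block, are immediate from this grading.
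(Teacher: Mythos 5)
Your reduction is correct, and it reaches the conclusion by a genuinely different decomposition than the paper's. The paper reduces from matrix powering rather than from general IMM (both sources are \Gl-hard, so either is legitimate) and realizes the sparsification graph-theoretically: one application of $A$ is spread over $n$ layered transfer steps, the $i$-th of which consists of identity matchings plus the star of out-edges of a single vertex $v_i$, and two alternating vertex blocks $U$ and $L$ play exactly the role of your registers in preventing read-after-write interference; correctness is argued via a bijection between walks in $G_A$ and paths in the layered gadget. Your version replaces the star gadgets by elementary rank-one updates $I_N + a\,e_pe_q^{\top}$ (entry-at-a-time rather than row-at-a-time) and allocates a fresh register per original factor, which makes the within-layer commutation and the identification of the $(R_n,R_0)$ block completely transparent via the grading $B_l\colon R_{l-1}\to R_l$, at the price of ambient dimension $O(n^2)$ and $O(n^3)$ factors instead of $O(n^2)$. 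Your route also handles arbitrary integer entries with no separate weight-encoding step, and the algebraic argument (only consecutive decreasing chains of the $B_l$ survive, and only the full chain lands in the $(R_n,R_0)$ block) is a cleaner correctness proof than the walk-bijection claim in the paper.

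One detail to make explicit: the order in which you emit the factors matters. The chain $B_nB_{n-1}\cdots B_1$ appears in the expansion only if the matrix product is $(I_N+B_n)(I_N+B_{n-1})\cdots(I_N+B_1)$, i.e.\ layer $n$'s elementary factors come \emph{first} in the output sequence; if the layers are emitted in increasing order the only nonzero expansion terms are the identity and the single $B_l$'s, and the $(R_n,R_0)$ block of the product is zero. Your phrase ``in layer order'' most naturally reads as the wrong ordering, and your ``reversing the layer labelling'' remark addresses which $M_l$ sits inside which $B_l$ rather than the order of the emitted factors, so one explicit sentence fixing the convention is needed. This is a presentational slip, not a gap in the construction.
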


\begin{proof}(of Proposition~\ref{prop:imm})
We reduce integer matrix powering to iterated matrix multiplication of bounded treewidth
graphs. Given an $(n \times n)$ matrix $A$ with $n$-bit entries, and $1^m, 1^i, 1^j$ the matrix
powering problem is to find the $(i,j)$-th entry of $A^m$. From the underlying digraph 
$G_A = (V = \{v_1,v_2, \ldots, v_n\}, E)$, we construct a sequence of bounded treewidth matrices as follows: 
We construct two gadgets $\mathcal{V}_l$ and $\mathcal{V}_u$ -- Both are graphs on $2n^2$ vertices divided in to $n^2$
partitions where each partition is a copy of $V$ (such that there are no edges between vertices in the partition):
$U = \sqcup_{i=1}^n U_i$ and $L = \sqcup_{i=1}^n L_i$ where each $L_i = U_i = V$. We also have the edges between:
\begin{enumerate}
 \item $v_i \in U_j$ and $v_i \in U_{j+1}$
 \item $v_i \in L_j$ and $v_i \in L_{j+1}$
\end{enumerate}
for all $i \in [n], j \in [n-1]$.The edges are basically an identity perfect matching between 
$U_i$ and $U_{i+1}$ and also $L_i$ and $L_{i+1}$. Now we add edges in $\mathcal{V}_l$
and $\mathcal{V}_u$ according to edges present in $G_A$: If $(v_i, v_1), \ldots, (u,v_r)$ 
are edges in $G_A$ out of vertex $v_i$, then
\begin{enumerate}
 \item In $\mathcal{V}_l$, we add an edge between $v_i \in L_i$ to $v_1, \ldots, v_r \in U_{i+1}$.
 \item In $\mathcal{V}_u$,  we add an edge between $v_i \in U_i$ to $v_{i_1}, \ldots, v_{i_r} \in L_{i+1}$
\end{enumerate}
We now construct a walk gadget $\mathcal{W}$ using alternating copies of the vertex gadgets
$\mathcal{V}_l$ and $\mathcal{V}_u$. To raise $A$ to the $m$-th power, construct:
$\mathcal{V}_1, \ldots, \mathcal{V}_m$ where $\mathcal{V}_1 = \mathcal{V}_l$,
$\mathcal{V}_2 = \mathcal{V}_u$ and so on. Connect $\mathcal{V}_i$ and $\mathcal{V}_{i+1}$
by the following edges:
\begin{enumerate}
 \item $v_i \in U_n$ and $v_i \in U_1, \forall i \in [n]$.
 \item $v_i \in L_n$ and $v_i \in L_1, \forall i \in [n]$.
\end{enumerate}
where $L_n, U_n \in \mathcal{V}_j$ and $L_1, U_1 \in \mathcal{V}_{j+1}$ for $j \in [m-1]$.
Additionally if $(v_n, v_{k_1}), \ldots, (v_n, v_{k_q})$ are edges out of $v_n$, then
add those corresponding edges between $L_n \in \mathcal{V}_j$ and $U_1 \in \mathcal{V}_{j+1}$
if $\mathcal{V}_j$ is a $\mathcal{V}_l$ gadget. Otherwise $\mathcal{V}_j$ is a $\mathcal{V}_u$
gadget and hence add the corresponding edges between $U_n \in \mathcal{V}_j$ and $L_1 \in \mathcal{V}_{j+1}$.
%Create $\mathcal{V}_1, \ldots, \mathcal{V}_n$ for vertices $v_1, \ldots, v_n$ in $G_A$
% respectively. We connect $\mathcal{V}_i$ and $\mathcal{V}_{i+1}$ by joining the corresponding vertices 
% in $U_n,L_n \in \mathcal{V}_1$ to $U_1, L_1 \in \mathcal{V}_{i+1}$ respectively for $i \in [n-1]$. 
% 
% Now create $m$ copies of $\mathcal{W}$ : $\mathcal{W}_1, \ldots, \mathcal{W}_m$. For $i \in [m-1]$ 
% connect the corresponding vertices of $U_n \in \mathcal{V}_n$ of $\mathcal{W}_i$ to $L_1 \in \mathcal{V}_1$ 
% of $\mathcal{W}_{i+1}$. 
It is easy to see that there is a bijection between walks of length $m$ in $G_A$ 
and paths of length $m$ in $\mathcal{W}$. The gadget $\mathcal{W}$ that results is of
constant treewidth.
\end{proof}
%\input{imm}

%% % \section{Hardness Results}
%% We have the following hardness result:
%% %whose proof is in the appendix for lack of space.
%% \begin{theorem}\label{thm:hard}
%%  For all constant $k \ge 1$, computing the determinant of an $(n \times n)$ matrix $A$
%%  whose underlying undirected graph has treewidth at most $k$ is $\Log$-hard.
%% \end{theorem}
%% \input{hardness}
%% \begin{theorem}\label{thm:imm}
%%  Given a sequence of bounded treewidth matrices with rational entries $M_1, M_2, \ldots, M_n$
%%  and $1^i, 1^j, 1^k$ as input, computing the $k$-th bit of $(i,j)$-th entry of $\prod_{l=1}^n
%%  M_l$ is $\Gl$-hard.
%% \end{theorem}
%% 
\section{Open Problems}\label{sec:concl}
% We use the Logspace-version of Courcelle's theorem \cite{EJT}, to
% compute the Determinant and several linear algebraic invariants of bounded 
% tree width matrices in \Log. We also prove some hardness results. A few questions
% still remain:
%\begin{enumerate}
 %\item 
 What is the complexity of other linear algebraic invariants such as
minimal polynomial of a bounded tree-width matrix?
What is the complexity of counting Euler Tours in undirected tours
in bounded treewidth graphs? On general graphs, this problem is known to be
$\ShP$-complete\cite{BW}. See \cite{CCMsp,CCMtw} for some recent progress on this problem.
\bibliography{skeleton}

\section*{Acknowledgement}
We would like to thank Abhishek Bhrushundi, Arne Meier, Rohith Varma and Heribert Vollmer 
for illuminating discussions regarding this paper. Special thanks are due to
Johannes K\"obler and Sebastian Kuhnert who were involved in the initial discussions
on the proof of Theorems~\ref{thm:btwDet},~\ref{cor:btwDirEuler};
to Stefan Mengel for proof reading the paper and finding a gap
in a previous ``proof'' of Theorem~\ref{thm:btwDet}; and to Raghav Kulkarni for suggesting
proof strategies for Corollary~\ref{cor:pow} and Lemma~\ref{lem:ratdet}; and to Sebastian Kuhnert for the proof of 
Proposition~\ref{prop:powhard}. Thanks are also due to anonymous referees for pointing out errors
in a previous version of the paper, for greatly simplifying the proof of Corollary~\ref{cor:fsle} and for pointing out the reference \cite{CF12}.
This research is partially funded by a grant from the Infosys Foundation.
%\newpage

%\newpage
\begin{appendix}
\section{Background}\label{sec:background}
\subsection{Background on Complexity Classes}
For all the standard complexity classes, we refer the reader to \cite{AB09}.
We define some non-standard complexity classes that we refer to here:

Let $f = (f_n)_{n \ge 0}$ where $f_n : \{0,1\}^n \to \mathbb{Z}$ be a family 
of integer valued functions. $f$ is in the complexity class $\Gl$ if and only
if there is a nondeterministic logspace machine $M$ such that for every $x$,
$f(x)$ equals the difference  between the number of accepting and rejecting 
paths of $M$ on input $x$.

$\Ceql$ is the class of languages $L$ for which there exists a $\Gl$ functions
$f$ that evaluates to zero on precisely those $x \in L$. The class $\Log^{\Ceql}$
consists of all languages decidable by a logspace machine at the base with oracle 
access to a $\Ceql$-complete language. The $\ACz$ hierarchy over $\Ceql$ is built
as an $\ACz$ circuit with gates which can make queries to a $\Ceql$ oracle and is 
known to collapse to $\Log^{\Ceql}$. For more information on the subtleties of 
implementing this hierarchy, see \cite{ABO}.

\section{Split Graph and the Biadjacency matrix}\label{sec:split}
We will often find it convenient to interpret an arbitrary matrix $A \in \mathbb{Q}_{n \times n}$ as 
representing both a weighted  directed graph $G_A$ and the weighted 
biadjacency matrix of an undirected bipartite graph $H_A$. We  
will exploit this duality between these two interpretations of a matrix to compute linear algebraic 
invariants.

Let $H_A = (R \cup C, E)$ be the bipartite graph on $2n$ vertices where $A_{ij}$ represents an edge
between the $i \in R$ to $j \in C$ of the appropriate weight. When we say $H_A$ has bounded treewidth, 
we mean that the underlying undirected graph has bounded treewidth. If we consider $A$ to represent a digraph
$G_A$ on $n$ vertices, we can construct the \textit{split} graph
of $G_A$ namely $\mbox{Split}(G_A)$ -- for each vertex $v$ of $G_A$, we create $v_{in}$ and $v_{out}$. If there is an edge 
between $u$ and $v$ in $G_A$, then we add an edge in $\mbox{Split}(G_A)$ between $u_{out}$ and $v_{in}$. For all vertices
$v$, we add an edge $(v_{in}, v_{out})$. $\mbox{Split}(G_A)$ is an undirected graph on $2n$ vertices.
We have the following easy proposition:

\begin{proposition}\label{prop:btwSplit}
$G_A$ has bounded treewidth, iff $\mbox{Split}(G_A)$ has bounded treewidth. 
\end{proposition}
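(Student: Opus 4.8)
The plan is to prove both implications by explicitly transforming one tree decomposition into the other, in fact establishing the quantitative sandwich $\mbox{tw}(G_A) \le \mbox{tw}(\mbox{Split}(G_A)) \le 2\,\mbox{tw}(G_A)+1$, from which the biconditional on boundedness is immediate.

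For the forward direction I would start with a width-$k$ tree decomposition $T = (V_T, E_T)$ of the underlying undirected graph of $G_A$ and produce a decomposition $T'$ of $\mbox{Split}(G_A)$ on the same tree shape by ``doubling'' each bag: replace every bag $X$ by $X' = \{v_{in}, v_{out} : v \in X\}$. Each of the three conditions of Definition~\ref{def:tdecomp} then follows mechanically. Coverage holds since $v \in X$ forces both copies $v_{in}, v_{out} \in X'$; connectivity for $v_{in}$ (and likewise for $v_{out}$) holds because the bags containing $v_{in}$ are exactly those $X'$ arising from bags $X \ni v$, which form a connected subtree of $T$; and every edge of $\mbox{Split}(G_A)$ is covered -- a ``vertex'' edge $(v_{in},v_{out})$ by any $X'$ with $v \in X$, and a ``connection'' edge $(u_{out}, v_{in})$ coming from $(u,v) \in E(G_A)$ by the bag $X'$ obtained from a bag $X$ containing both $u$ and $v$, which exists by condition~3 for $T$. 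Since $|X'| \le 2|X|$, the width at most doubles (plus one), giving the upper bound.

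For the reverse direction I would go the other way and ``merge'' the two copies: starting from a width-$k'$ decomposition $T'$ of $\mbox{Split}(G_A)$, define a decomposition $T$ on the same tree by setting $X = \{v : v_{in} \in X' \mbox{ or } v_{out} \in X'\}$. Coverage and the edge condition are again routine (an edge $(u,v) \in E(G_A)$ lifts to $(u_{out}, v_{in})$, which lies in some bag $X'$, so $u,v \in X$). Since the merge map is surjective onto $X$ we get $|X| \le |X'|$, so the width does not increase.

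The one step requiring care -- and the only genuine obstacle -- is the connectivity condition in this reverse direction, since a priori the bags containing $v_{in}$ and the bags containing $v_{out}$ could form two disjoint subtrees, which would break condition~2 for $v$. Here I would crucially invoke the ``vertex'' edges $(v_{in}, v_{out})$ that the split construction inserts for every $v$: the bags containing $v$ in $T$ are exactly those containing $v_{in}$ or $v_{out}$ in $T'$, and the two subtrees $\{X' : v_{in} \in X'\}$ and $\{X' : v_{out} \in X'\}$ are each connected and, by the edge condition applied to $(v_{in}, v_{out})$, must share a common bag; hence their union is connected. This is precisely why the split construction adds the edge $(v_{in}, v_{out})$ rather than leaving the two copies unconnected. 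Combining the two directions yields the claimed sandwich of treewidths, so boundedness of one side is equivalent to boundedness of the other.
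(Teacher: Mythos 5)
Your proof is correct, and the forward direction (doubling each bag $X$ to $\{v_{in},v_{out}:v\in X\}$) is exactly what the paper does. Where you diverge is the reverse direction: the paper dispatches it in one line by observing that $G_A$ is a minor of $\mbox{Split}(G_A)$ (contract every edge $(v_{in},v_{out})$) and implicitly invoking minor-monotonicity of treewidth, whereas you unpack that fact into an explicit bag-merging construction and verify the decomposition axioms by hand. Your version is more self-contained and correctly isolates the one non-trivial point -- that the subtrees of bags containing $v_{in}$ and $v_{out}$ intersect precisely because the split construction adds the edge $(v_{in},v_{out})$, so the merged bags for $v$ form a connected subtree -- which is exactly the standard proof that contraction does not increase treewidth. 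The paper's version buys brevity by citing a well-known closure property; yours buys transparency and the explicit quantitative sandwich $\mbox{tw}(G_A)\le\mbox{tw}(\mbox{Split}(G_A))\le 2\,\mbox{tw}(G_A)+1$, matching the paper's remark that the treewidth ``almost doubles.'' Either route is perfectly acceptable here.
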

\begin{proof}
We can construct the tree decomposition of $\mbox{Split}(G_A)$ from that of $G_A$ by 
introducing one new vertex for every vertex in the bag. This almost doubles the treewidth. The other 
directions follows from the fact that $G_A$ can be obtained as minor of $\mbox{Split}(G_A)$. 
\end{proof}

\section{Omitted Proofs from Section~\ref{sec:btwDet}}
\subsection{Proof of Lemma~\ref{lem:mainION}}

%\section{Proof of Lemma~\ref{lem:mainION}}
\begin{proof}
%[of Lemma~\ref{lem:mainION}]
We first define the graph $G'$ in which we add three vertices $b_l,
b_0,b_r$ for every bag $B$ in the tree decomposition of $G$. 
We will use the following running example, let $A$ be a parent bag 
in the tree decomposition with left child $B$ and right child $C$. 
The scheme we propose adds vertices $a_l, a_0, a_r$,
$b_l, b_0, b_r$ and $c_l, c_0, c_r$, with $a_l,a_0,a_r,b_r,c_l \in A$,
$a_l,b_l,b_0,b_r \in B$ and $a_r, c_l,c_0,c_r \in C$.

% We augment the tree decomposition $T$ of $G$ to one $T'$ of $G'$ by including
% $b_l,b_0,b_r$ in the corresponding bag $B$ and also including 
% $b_l$ in $B_l$, $b_r$ in $B_r$ and $b_0$ in $B_0$ where $B_l,B_0,B_r$
% are the left child bag, parent bag and right child bag of $B$
% whenever they exist.

We put an edge between every bag vertex
$b \in V(G') \setminus V(G)$ and every vertex $v$ sharing a bag with
$b$. Clearly this is a valid tree decomposition of tree-width 
which is at most $6$\footnote{Alternately, following scheme also works: Add all the
$6$ bag vertices($b_l, b_0, b_r$ and $c_l, c_0, c_r$) of the children to the parent bag $A$. 
Symmetrically, add the bag vertices $a_l, a_0, a_r$ to both $B$ and $C$. Note that this is 
a local operation and it increases the treewidth of every bag by atmost $9$: $6$ from the 
children and $3$ from the parent. The final graph $G'$ obtained this way has treewidth at
most $12$ more than $G$.} greater than that of $G$.

We now traverse the tree $T'$ using an Euler Traversal
(see \cite{CookMckenzie}).
Every internal bag of the the tree is visited thrice in an Euler traversal - first when 
the traversal visits the bag the first time, second after exploring its left child, and 
finally third when it is done exploring its right child. For a bag $B$ in the tree 
decomposition, if $b_l, b_0, b_r$ are the vertices added, the vertices are visited in the 
following order in the Euler traversal of the tree: First $b_l$ is visited, after which the
left child of $B$ is explored. Following this, the traversal returns to $B$ via $b_0$, after
which the right child of $B$ is explored. Finally we visit $B$ again via $b_r$ and proceed to
$B$'s sibling via $B$'s parent. 

Define $\ION_{Bag}$ to be the following ordering of the bag vertices:
$a_l < b_l < b_0 < b_r < a_0 < c_l < c_0 < c_r < a_r$. In the case when 
$b$ has children, they are explored between $b_l$ and $b_0$ (left child) and $b_0$ and 
$b_r$ (right child). 

Next we extend the relation $\ION_{Bag}$ to a relation $\ION_1$ on pairs
which include vertices of $G$.  
For every vertex $v$ occurring in bag $A$ if $v$ does not belong to
the left child $B$ of $A$ but belongs to the right child $C$
then add the tuple $a_0 < v$ to $\ION_1$
else we just add $a_l < v$. Symmetrically, we add $v < a_0$ or
$v < a_r$ depending on whether $v$ belongs to the left child $B$
but not to the right child $C$, or not. If it does not belong
to either child, we add the tuples $b_r < v$ and $v < c_l$.

Now consider the transitive closure $\ION_1^{*}$ of $\ION_1$. This may well
not be a total order on the vertices of $G'$. But we have the following:
\begin{claim}
Bag vertices are totally ordered under $\ION_1^{*}$. If two vertices 
of $G$ are incomparable under $\ION_1^{*}$, then there must exist
a common bag to which both must belong.
\end{claim}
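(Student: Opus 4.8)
The plan is to exploit two structural features of the construction: the Euler traversal of $T'$ linearly orders the bag vertices, and $\ION_1$ only ever relates a vertex of $G$ to bag vertices, never two vertices of $G$ to one another. First I would let $\prec_{\mathsf{bag}}$ denote the transitive closure of $\ION_{Bag}$ restricted to the bag vertices; by construction this is exactly the order in which the triples $b_l,b_0,b_r$ are encountered during the Euler traversal, so it is a \emph{linear} order on $V(G')\setminus V(G)$. Since $\ION_{Bag}\subseteq\ION_1$ we get $\prec_{\mathsf{bag}}\subseteq\ION_1^{*}$, so any two bag vertices are already comparable under $\ION_1^{*}$. The technical heart of the argument is then a single \emph{consistency} lemma: for each $v\in V(G)$, every bag vertex $p$ with $(p,v)\in\ION_1$ satisfies $p\prec_{\mathsf{bag}}q$ for every bag vertex $q$ with $(v,q)\in\ION_1$; equivalently, $v$ is pinned into a nonempty interval $I_v=(\ell_v,r_v)$ of $\prec_{\mathsf{bag}}$, where $\ell_v$ is the largest lower anchor and $r_v$ the smallest upper anchor assigned to $v$.

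Granting the consistency lemma, the first assertion follows quickly. Comparability is already in hand, so I only need antisymmetry, i.e. that $\ION_1^{*}$ has no directed cycle through bag vertices. Here I would use that $\ION_1$ contains no pair of two $G$-vertices: along any $\ION_1$-path the $G$-vertices occur only as isolated interior nodes, each entered from a lower anchor $p$ and left towards an upper anchor $q$. By the consistency lemma each such detour obeys $p\prec_{\mathsf{bag}}q$, while every step that stays among bag vertices is increasing in $\prec_{\mathsf{bag}}$. Hence every $\ION_1$-path between bag vertices is strictly $\prec_{\mathsf{bag}}$-increasing, so a cycle would yield $p\prec_{\mathsf{bag}}p$, a contradiction. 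Thus $\ION_1^{*}$ is antisymmetric on bag vertices and, with comparability, is a total order there.

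For the second assertion I would argue the contrapositive: if $u,v\in V(G)$ share no bag then they are $\ION_1^{*}$-comparable. Using the absence of $G$--$G$ edges together with the consistency lemma, $u\prec_{\ION_1^{*}}v$ holds exactly when some upper anchor of $u$ precedes or equals some lower anchor of $v$, i.e. when $r_u\preceq_{\mathsf{bag}}\ell_v$, and symmetrically for $v\prec_{\ION_1^{*}}u$. Consequently $u$ and $v$ are incomparable precisely when their intervals $I_u$ and $I_v$ overlap in $\prec_{\mathsf{bag}}$, so it suffices to show that vertices living in disjoint subtrees of $T'$ receive disjoint intervals. This is where the connectivity axiom (item~2 of Definition~\ref{def:tdecomp}) enters: the bags containing $u$ and those containing $v$ form two vertex-disjoint connected subtrees of $T'$, which in a tree are separated by a single tree edge, and I would verify that the anchor placement records this separation as a bag vertex $p$ lying Euler-between $I_u$ and $I_v$, thereby serving as an upper anchor of one vertex and a lower anchor of the other and forcing comparability.

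The main obstacle is the consistency lemma, and with it the separation step, both of which require an induction on the subtree $S_v$ of bags containing $v$ and careful bookkeeping of where the anchors $a_l,a_0,a_r$ (and the neighbouring $b_r,c_l$ used at a leaf bag of $S_v$) fall in the Euler order. The genuinely delicate situation is a vertex that occupies many bags and behaves like a separator: there one must pin down, from the exact phrasing of the anchor rules, that the constraints contributed by the various bags of $S_v$ form a \emph{nested} rather than a crossing family, so that their intersection $I_v$ is a genuine nonempty interval. Establishing this nesting from the connectivity of $S_v$ and the contiguity of the Euler traversal on $S_v$ is the crux; once it is available, both the nonemptiness of $I_v$ (first assertion) and the disjointness of intervals for vertices that do not cohabit a bag (second assertion) fall out.
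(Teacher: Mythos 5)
Your architecture differs from the paper's: you route both assertions through a single global ``consistency lemma'' (every lower anchor of a vertex $v$ Euler-precedes every upper anchor of $v$, so that $v$ is pinned into a nonempty interval $I_v$ of the Euler order), whereas the paper proves the second assertion by a direct case analysis on the roots $Q,R$ of the bag-subtrees $S_u,S_v$ and their least common ancestor $P$, exhibiting an explicit increasing chain such as $u < q_r < p_0 < r_l < v$ inside $\ION_1^{*}$; that argument needs only the anchors contributed by one bag for $u$ and one bag for $v$, never the full interval structure. Your observation that $\ION_1$ contains no pair of two $G$-vertices, so that antisymmetry on bag vertices reduces to the consistency of each vertex's anchors, is correct and is in fact more careful than the paper's one-sentence treatment of the first assertion.

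The genuine gap is that the consistency lemma --- which you yourself label ``the technical heart'' and ``the crux'' --- is never proved, and it carries essentially all of the content of the claim. Worse, the nesting you propose in order to establish it fails exactly where you anticipate trouble: at a bag $A$ of $S_v$ both of whose children $B$ and $C$ contain $v$. Every upper anchor that bag $B$ assigns to $v$ is one of $b_0$, $b_r$, or the left bag-vertex of $B$'s right child, all of which lie in the Euler segment strictly before $a_0$; every lower anchor that bag $C$ assigns to $v$ is one of $c_l$, $c_0$, or the right bag-vertex of $C$'s left child, all of which lie strictly after $a_0$. The two bags therefore pin $v$ into disjoint segments of the Euler order: the conjunction of their constraints is empty rather than a nonempty interval, and read literally it yields $y < v < x$ with $x$ Euler-preceding $y$, i.e.\ a cycle in $\ION_1^{*}$. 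So the deferred step is not mere bookkeeping: as you have stated it, it is false for any $v$ whose bag-subtree branches, and repairing it requires either restricting which bags of $S_v$ contribute anchors or retreating to the pairwise comparisons that the paper's case analysis actually uses. Until that is resolved, neither assertion of the claim is established by your proposal.
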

\begin{proof}
That bag vertices are totally ordered follows from the definition of
Euler traversal of a tree. To see the other part of the claim let
$u,v \in V(G)$ be unordered by $\ION_1^{*}$. Let the least common
ancestor (LCA) bag of all the bags to which $u$ belong be $Q$ and the 
LCA bag of all bags containing $v$ be $R$ and further $P$ is the 
LCA bag of $Q,R$. From the assumption that $u,v$ do not belong to
the same bag not all three $P,Q,R$ can be the same, in particular
$Q,R$ are distinct. If $P$ is distinct from $Q,R$
then from $\ION_1^{*}$ we know that: $q_r < p_0 < r_l$ (assuming,
wlog that $Q$ is a left descendant and $R$ a right descendant of $P$).
Also, $u < q_r$ (or possibly even $u < q_0 < q_r$)
and $r_l < v$ (or even $r_l < r_0 < v$) are present in $\ION_1$. Thus
$u < v$ is present in $\ION_1^{*}$. This leaves the case when $P = Q$
(the other case, $P = R$, is symmetric). Let $P'$ be the bag 
containing $u$ which
is nearest along the tree to $R$. Without (much) loss of generality
suppose $R$ is a descendant of the left child of $P'$. Let $R'$
be the left child of $P'$ ($R'$ may be the same as $R$). 
From the fact that $R$ is the highest bag
containing $v$, we know that $v < r_r$. Because $u$ does
not occur in  either child of $P'$ we have that $r'_r < u$.
Now $r_r$ is either the same as $r'_r$ (if $R = R'$) or $r_r < r'_r$
(if $R$ is a proper descendant of $R'$). In either case we get 
$v < u$ in the transitive closure of $\ION_1^{*}$. Other cases are 
similar.  
\end{proof}
We can find $\ION_1^{*}$ in \Log\ by using \cite{EJT}.
From this we find pairs $u,v\in V(G)$ which are incomparable under 
$\ION_1^{*}$. Next we order any unordered $u,v$ in \Log\ by their
 binary values and add these tuples (i.e. one of $u < v$ and $v < u$
for every unordered $u,v$) to $\ION_1$ to yield $\ION$.
Notice that the vertices of $G$ are totally ordered under 
$\ION^{*}$ and because of the claim above $\ION$ is compatible with
the tree decomposition $T'$
\end{proof}
\subsection{Proof of Lemma~\ref{lem:ratdet}}

\begin{proof}
  We replace each edge $(u,v)$ in $G_A$ by a series-parallel graph with edge weights from $\{0,1\}$ 
 such that the sum of weights of all paths between $u$ and $v$ exactly equals the weight of $(u,v)$
 and the construction doesn't alter the sign of the cycle cover in which $(u,v)$ is present.
 
 We use a construction similar to \cite{Val79} -- If the weight of the edge $(u,v)$ is 
 an $n$-bit integer $w = w_nw_{n-1}\ldots w_1$, we create a gadget of weight $2^i$ 
 for each $w_i$ and add edges $(u,s_i)$ and $(t_i, v)$. We add self loops on all the vertices
 in this gadget except $u$ and $v$. We also ensure that all paths between $u$ and $v$ are of equal length 
 (say $l$). If $n'$ is the number of new vertices added, then a cycle cover of $G_B$ will consist of 
 the original cycles in $G_A$ along with the $n' - l$ self loops that result from the series-parallel
 graph now representing $(u,v)$. Since the sign of a monomial in the determinant is decided by the number
 of cycles (equivalently the number of heads in a cycle cover) corresponding to the permutation of the 
 monomial, if we can ensure that $n' - l$ is always even, then the sign of the cycle cover will be 
 $(-1)^k = (-1)^{n' - l + k} = (-1)^k$. This is easily taken care of by replacing $v$ by a new vertex $v'$,
 removing all the edges $(t_i, v)$, introducing edges $(t_i, v')$ and $(v', v)$.

 It remains to show that from the tree decomposition of $G_A$, we can obtain a tree decomposition of $G_B$ 
 in $\Log$. This is easily accomplished as follows: For an edge $(u,v)$ in $G_A$, find the highest bag $b$ 
 in the tree where it occurs and add a new bag $b_{uv}$ as a child to $b$ which just contains $u$ and $v$ and
 the edge $(u,v)$. Attach the tree decomposition of the series-parallel graph gadget corresponding to $(u,v)$
 in $G_B$ as a child of $b_{uv}$. It is easy to verify that what results is still a tree decomposition --
 Every edge in $B$ is present in some bag. If $(u,v)$ is present in bags $b_1$ and $b_2$, it is present in
 all the bags in the unique path between $b_1$ and $b_2$ in the tree. Here one has to argue two cases: If
 $(u,v)$ is an edge in the series-parallel gadget, then it is only present in the bags corresponding to the
 tree decomposition of the series parallel graph (children of $b_{uv}$). Else $(u,v)$ must have been an edge
 in $G_A$. In this case, after appending the tree decomposition of the series-parallel graph to $b_{uv}$, it 
 still stays a tree decomposition. The size of any bag in this decomposition is $\mbox{max}(k,2)$ (since any
 series-parallel graph has treewidth at most $2$). 
\end{proof}
\end{appendix}
\end{document}